\documentclass[reprint,floatfix,aps,longbibliography]{revtex4-2}
\usepackage{physics}
\usepackage{amsmath,amssymb}
\usepackage[colorlinks,linkcolor=blue,anchorcolor=red,citecolor=red]{hyperref}
\usepackage[capitalize]{cleveref}
\usepackage{xcolor}
\usepackage{graphicx}
\usepackage{mathtools}
\usepackage{subcaption}
\usepackage{amsthm}

\newcommand{\cart}{\mathbin{\square}}

\newcommand{\ceil}[1]{\left\lceil #1 \right\rceil}

\newtheorem{theorem}{Theorem}
\newtheorem{lemma}{Lemma}
\newcommand{\bigo}[1]{\mathcal{O}(#1)}
\newcommand{\uw}[1]{U_w\mathopen{}(#1)\mathclose{}}
\newcommand{\uf}[1]{U_f\mathopen{}(#1)\mathclose{}}
\DeclareMathOperator{\diag}{diag}

\DeclarePairedDelimiter{\nint}\lfloor\rceil
\DeclareMathOperator{\spn}{span}
\newcommand{\cg}{$\mathcal{CG}$ }

\begin{document}

\title{A framework for optimal quantum spatial search using alternating phase-walks}

\author{S. Marsh}
\email{samuel.marsh@research.uwa.edu.au}
\affiliation{Department of Physics, The University of Western Australia, Perth, Australia}

\author{J. B. Wang}
\email{jingbo.wang@uwa.edu.au}
\affiliation{Department of Physics, The University of Western Australia, Perth, Australia}

\date{\today}

\begin{abstract}
	We present a novel methodological framework for quantum spatial search, generalising the Childs \& Goldstone ($\mathcal{CG}$) algorithm via alternating applications of marked-vertex phase shifts and continuous-time quantum walks. We determine closed form expressions for the optimal walk time and phase shift parameters for periodic graphs. These parameters are chosen to rotate the system about subsets of the graph Laplacian eigenstates, amplifying the probability of measuring the marked vertex. 
	The state evolution is asymptotically optimal for any class of periodic graphs having a fixed number of unique eigenvalues. We demonstrate the effectiveness of the algorithm by applying it to obtain $\mathcal{O}(\sqrt{N})$ search on a variety of graphs. One important class is the $n \times n^3$ rook graph, which has $N=n^4$ vertices. On this class of graphs the $\mathcal{CG}$ algorithm performs suboptimally, achieving only $\mathcal{O}(N^{-1/8})$ overlap after time $\mathcal{O}(N^{5/8})$. Using the new alternating phase-walk framework, we show that $\mathcal{O}(1)$ overlap is obtained in $\mathcal{O}(\sqrt{N})$ phase-walk iterations. 
\end{abstract}

\maketitle

\section{Introduction}

Continuous-time quantum walks (CTQWs)
are widely used as key components of quantum algorithms \cite{Childs2003,Ambainis2003,Venegas-Andraca2012}. In 2004, \citet{Childs2004} introduced the well-known \cg spatial search algorithm, using a perturbed quantum walk to locate a marked vertex on an $N$-vertex graph. For certain classes of graphs, the algorithm is able to obtain $\bigo{1}$ overlap with the marked vertex after $\bigo{\sqrt{N}}$ time, achieving asymptotically optimal quantum scaling. The hypercube, complete graph, lattices with dimension greater than 4, the star graph \cite{Cattaneo2018}, and almost all Erdos-Renyi random graphs \cite{Chakraborty2016} are some of the graphs that admit efficient quantum search under the \cg algorithm. Despite significant recent progress \cite{Chakraborty2020}, the necessary and sufficient conditions for efficient \cg spatial search have not been fully characterised, generally requiring an instance-specific analysis of the graph under consideration.

In this work, we construct a new formalism for quantum spatial search using an alternating series of continuous-time quantum walks and marked vertex phase shifts. Consider a graph with Laplacian $\mathcal{L}=D-A$, where $D$ is a diagonal matrix listing the vertex degrees and $A$ is the adjacency matrix. The general quantum state evolution starting in the equal superposition $\ket{s}$ over all vertices is defined as
\begin{equation}
    \ket{\Vec{t}, \Vec{\theta}} = \left( \prod_{k=1}^{p} \uw{t_k} \uf{\theta_k} \right) \ket{s} \, ,
    \label{eq:alternatingwalk}
\end{equation}
where $\uf\theta = e^{-i \theta \ket{\omega}\bra{\omega}}$ uses the oracular `marking' Hamiltonian to apply a relative phase shift $\theta$ to the marked vertex $\ket{\omega}$, and $\uw t = e^{-i t \mathcal{L}}$ performs a quantum walk for time $t$. To establish efficient quantum search, the aim is to determine values for $\Vec{t}$ and $\Vec{\theta}$ such that $|\langle\omega|\Vec{t}, \Vec{\theta}\rangle|^2 = \bigo 1$ with $p = \mathcal{O}(\sqrt{N})$. This formulation can be considered as a generalisation of the gate-model implementation of the \cg algorithm \cite{Roland2003}, and thus all graphs that admit optimal search under the \cg formalism also admit optimal search under the alternating phase-walk formalism.

Furthermore, the database search problem can be thought of as a simple combinatorial optimisation problem, where the objective function $f(x)$ is 1 for the marked element and 0 otherwise. The proposed alternating phase-walk framework can thus be viewed as the Quantum Walk Optimisation Algorithm (QWOA) \cite{Marsh2020} applied to search. The topic of quantum combinatorial optimisation is currently of high research interest in both the theoretical and experimental domain \cite{farhi2014quantum,a12020034,Wang2020,Marsh2019,Zhou2020,Harrigan2021,Pagano25396}, due to the applicability to near-term noisy quantum hardware. In quantum approximate optimisation, different mixers (or in the case of the QWOA, different graphs) are used to encode the constraints of the problem \cite{a12020034,Wang2020,Marsh2019}. This work provides a path towards analytically characterising the performance of different mixers on optimisation problems, as well as finding in closed form the optimal values for the variational parameters involved in these algorithms.

In this paper, we restrict to periodic graphs. A graph is periodic if there exists a time $\tau > 0$ such that for every vertex $\ket{v}$, $\abs{\bra{v}\uw{\tau} \ket{v}} = 1$. A graph exhibits periodicity if and only if the non-zero Laplacian eigenvalues are all rational multiples of each other \cite{Godsil2011}. Furthermore, in most cases we will restrict to vertex-transitive graphs so that the same state evolution applies independently of which vertex is marked; in this case the graph must have integer eigenvalues \cite{Omran2009}. An intuitive property of periodic graphs is that $\uw{t+2\pi}=\uw{t}$, and thus the total walk time in \cref{eq:alternatingwalk} is bounded by $2 \pi p$.

Our aim is to design an alternating phase-walk state evolution for periodic graphs, mapping the initial equal superposition to the marked vertex using quantum walks and marked vertex phase shifts. We derive the optimal walk times $\Vec{t}$ and phase shifts $\Vec{\theta}$ in exact closed form based on spectral analysis of the (unperturbed) Laplacian. The worst-case complexity of the state evolution is characterised by an integer $d$, which is bounded above by the number of unique graph eigenvalues (with details given in \cref{sec:spatialsearchgeneral}). In the worst-case, the total number of iterations is approximately $\frac{1}{2}\left(\frac{\pi}{2}\right)^d \sqrt{N}$ for large graphs, making the scaling asymptotically optimal when $d$ is independent of $N$. The state evolution can be expressed as a series of Householder reflections, in much the same way as the well-known Grover database search algorithm consists of reflections about the marked vertex and the equal superposition \cite{Grover1996}. Indeed, for $d=1$ graphs, the state evolution is that of the Grover algorithm, obtaining $1 - \mathcal{O}(1/N)$ success probability and having optimal query complexity approximately $\frac{\pi}{4}\sqrt{N}$.

Furthermore, using techniques similar to that used to eliminate the $\mathcal{O}(1/N)$ error from Grover search \cite{Long2001,Brassard2002,Toyama2012}, we provide a way to eliminate error from a component of the alternating phase-walk state evolution without affecting the efficiency of the algorithm. We utilise this modification to prove $\mathcal{O}(1)$ success probability on any $d=2$ graph. In combination with our recent work \cite{marsh2021}, there is strong evidence that the state evolution can be made entirely deterministic to achieve an exact unitary evolution from the equal superposition to a marked vertex on any periodic graph.

Finally, we apply the algorithm to some example graph classes. Most notably, we study a specific class of rook graphs that represent the valid moves of a rook piece on a rectangular $n \times n^3$ chessboard. The \cg algorithm is known to perform suboptimally on this class of graphs, taking time $\bigo{N^{5/8}}$ to achieve $\mathcal{O}(N^{-1/8})$ success probability~\cite{Chakraborty2020}. In contrast, the alternating phase-walk framework yields $\bigo 1$ overlap with $p=\bigo{\sqrt{N}}$ on any rook graph for which $d=2$, achieving asymptotically optimal search.

\section{Search framework}

\subsection{Motivation}

As a starting point, in this section we interpret Grover's algorithm as an alternating phase-walk spatial search on the complete graph in order to motivate a generalisation. The connection of Grover search to the complete graph is well-known in the context of the \cg algorithm \cite{Farhi1998}. Moreover, \cg spatial search is a generalisation of the `analog analogue' of Grover's search from the complete graph to an arbitrary graph Laplacian $\mathcal{L}$. Our framework can be thought of as the same generalisation applied to the Grover iterate. We display this relationship in \cref{tab:searchalgs}. Note that in this table we express the Grover iterate in its generalised form, which is used to achieve deterministic search \cite{Grover1996,Long2001,Brassard2002,Hoyer2000}, and also for other purposes such as optimal fixed-point search \cite{Yoder2014}.

\begin{table}[h]
    \centering
    \begin{tabular}{|l|c|}\hline
        \textbf{Search framework} & \textbf{Search unitary} \\ \hline
        Grover \cite{Grover1996,Long2001,Brassard2002,Hoyer2000,Yoder2014} & $U_k = e^{-i \beta_k \ket{s} \bra{s}}e^{-i \alpha_k \ket{\omega} \bra{\omega}}$ \\ \hline
        Analog Grover \cite{Farhi1998} & $U = e^{-i t (\ket{s} \bra{s} + \ket{\omega} \bra{\omega})}$ \\ \hline
        Spatial ($\mathcal{CG}$) \cite{Childs2004} & $U = e^{-i t (\gamma \mathcal{L} + \ket{\omega} \bra{\omega})}$ \\ \hline
        Alternating phase-walk & $U_k = e^{-i t_k \mathcal{L}} e^{-i \theta_k \ket{\omega} \bra{\omega}}$ \\ \hline
    \end{tabular}
    \caption{Comparison between the ansatzes of quantum search algorithms.}
    \label{tab:searchalgs}
\end{table}

Explicitly, the original Grover iterate for an $N$-element database can be expressed as
\begin{equation}
	U = \uw{\frac{\pi}{N}} \uf{\pi} \, .
\end{equation}
Here, $\uf{\pi}$ performs a $\pi$-phase shift of the marked element,
\begin{equation}
	\uf{\pi} = e^{-i \pi \ket{\omega} \bra{\omega}} = \mathbb{I} - 2 \ket{\omega}\bra{\omega} \, ,
\end{equation}
and $\uw{\frac{\pi}{N}}$ is a quantum walk on the complete graph $\mathbb{K}_N$ for time $t=\frac{\pi}{N}$. The Laplacian for the complete graph can be expressed as $\mathcal{L} = N (\mathbb{I} - \ket{s} \bra{s})$, so
\begin{equation}
	\uw{\frac{\pi}{N}} = e^{-i t \mathcal{L}} = e^{-i \frac{\pi}{N} (N \mathbb{I} - N \ket{s} \bra{s})} = -(\mathbb{I} - 2\ket{s}\bra{s}) \, .
	\label{eq:completegraphwalk}
\end{equation}
The above is recognisable as the Grover diffusion operator \cite{Grover1996}. The iterate $U$ then takes the familiar form
\begin{equation}
    U = -(\mathbb{I} - 2 \ket{s} \bra{s}) (\mathbb{I} - 2 \ket{\omega} \bra{\omega})
\end{equation}
where evolution is restricted to a 2-dimensional subspace spanned by the marked vertex and the equal superposition. It is well known \cite{Grover1996,Brassard2002} that
\begin{equation}
    \ket{\omega} = U^{(p-1)/2} \ket{s}
    \label{eq:grover}
\end{equation}
for $p=\frac{\pi}{2\arccos\sqrt{1 - N^{-1}}}$. Thus for large $N$, approximately $\frac{\pi}{4}\sqrt{N}$ iterations (or equivalently, oracle queries) are used to map the equal superposition to the marked element. The query complexity of Grover's algorithm is proven to be the exact lower limit for black-box quantum search \cite{Zalka1999,Dohotaru2009}. Hence we can conclude that the complete graph admits exactly optimal spatial search under the alternating phase-walk formalism.

However, what if the Laplacian of the complete graph is replaced with that of another graph? Clearly if a certain $t$ can be found such that $U_w(t) = e^{-i t \mathcal{L}} = -(\mathbb{I} - 2\ket{s}\bra{s})$, the search iterate will be functionally identical to Grover's algorithm and thus will admit optimal quantum search. To determine when this is the case, we write the walk operator in terms of the spectral decomposition of the Laplacian. Denote the set of Laplacian eigenvalues and eigenstates as $\Lambda=\{ \lambda_0, \ldots, \lambda_{N-1} \}$ and $\chi = \{ \ket{b_0}, \ldots, \ket{b_{N-1}} \}$ respectively. Note that the equal superposition $\ket{s} \equiv \ket{b_0}$ is always an eigenstate of $\mathcal{L}$ with eigenvalue $\lambda_0 = 0$, since the rows and columns of a Laplacian matrix sum to zero. We will also make the assumption that $\braket{\omega}{b} \geq 0$ for all $\ket{b} \in \chi$ without loss of generality: if not, set $\ket{b} \vcentcolon= -\ket{b}$. Re-writing the quantum walk operator,
\begin{align}
	U_w(t) &= e^{-i t \mathcal{L}} = \sum\limits_{k=0}^{N-1} e^{-i t \lambda_k} \ket{b_k} \bra{b_k} \\
	&= -\left(\mathbb{I} - 2 \ket{s} \bra{s} - \sum\limits_{k=1}^{N-1} (1 + e^{-i t \lambda_k}) \ket{b_k} \bra{b_k}\right) \, ,
	\label{eq:walkdualbasis}
\end{align}
where we have used $\mathbb{I} = \sum\limits_{k=0}^{N-1} \ket{b_k} \bra{b_k}$.

Thus, if there exists a $t$ such that for all $1 \leq k \leq N-1$, $e^{-i t \lambda_k}=-1$, then \cref{eq:walkdualbasis} reduces to \cref{eq:completegraphwalk} and we get the optimal behaviour exhibited by the complete graph. With this in mind, a sensible first walk time is $t = \frac{\pi}{\gcd \Lambda}$. For each $k$, if $\frac{\lambda_k}{\gcd \Lambda}$ is odd then the corresponding term will be eliminated. Utilising the greatest common divisor requires that all eigenvalues are rational, thereby restricting to periodic graphs \cite{Godsil2011}. In \cref{sec:phasewalkapplications} we give an example class of graphs, separate from the complete graph, where this single choice of walk time $t$ produces a state evolution that matches Grover's algorithm. For most graphs, however, clearly this single walk time will not suffice to eliminate the interaction from other Laplacian eigenstates. In other words, a single fixed $t$ does not eliminate \textit{all} of the extra terms from \cref{eq:walkdualbasis}. We will instead need a series of walk times, interleaved with marked vertex phase shifts, that iteratively eliminate the interaction with other Laplacian eigenstates. This is the core idea behind the spatial search algorithm proposed in the following section.

\subsection{Spatial search on periodic graphs}
\label{sec:spatialsearchgeneral}

We start by making a series of definitions that will be necessary for the proposed quantum spatial search. First, define the initial set of unique non-zero Laplacian eigenvalues $\Lambda_0 = \Lambda\setminus\{ 0\}$. Then iteratively for $k \geq 1$, let
\begin{align}
    t_{k} &= \frac{\pi}{\gcd \Lambda_{k-1}} \, , \\
    \Lambda_k &= \{ \lambda \in \Lambda_{k-1} \mid  e^{-i t_k \lambda} = 1 \} \, \\
    \bar{\Lambda}_k &= \{ \lambda \in \Lambda_{k-1} \mid  e^{-i t_k \lambda} = -1 \} \, .
\end{align}
That is, for each $k$, we define a quantum walk time $t_{k}$ depending on the greatest common divisor of a subset of the Laplacian's eigenvalues. We make note that the use of the $\gcd$ on subsets of the graph spectrum also arises in determining the minimum period $\tau$ of a periodic vertex~\cite{Godsil2011}. The walk time subdivides the current set of eigenvalues $\Lambda_{k-1}=\Lambda_k \cup \bar{\Lambda}_k$ into two disjoint sets. For any $k$ we must have at least one element in $\bar{\Lambda}_k$, since otherwise all elements in $\Lambda_{k-1}$ would have a common factor of 2, contradicting the $\gcd$. Hence, by continual subdivision $\Lambda_{k-1}\xrightarrow{t_k} \{\Lambda_k, \bar{\Lambda}_k\}$, eventually $\Lambda_{d} = \emptyset$ for  $d \geq 1$ divisions, and the above formulation produces a finite set of $d$ walk times $t_1, \ldots, t_d$.

Furthermore, we also define for each $0 \leq k \leq d$ the corresponding sets of Laplacian eigenstates $\chi_k$ and $\bar{\chi}_k$ (including all degenerate eigenstates). We associate with each of these sets the projection of the marked vertex $\ket{\omega}$ onto the elements,
\begin{align}
    \ket{\chi_k} &= \frac{1}{\sqrt{\sum\limits_{\ket{b} \in \chi_k} \braket{\omega}{b}^2}}\sum\limits_{\ket{b} \in \chi_k} \braket{\omega}{b} \ket{b} \, ,\\
    \ket{\bar{\chi}_k} &= \frac{1}{\sqrt{\sum\limits_{\ket{b} \in \bar{\chi}_k} \braket{\omega}{b}^2}}\sum\limits_{\ket{b} \in \bar{\chi}_k} \braket{\omega}{b} \ket{b} \, .
    \label{eq:gendef}
\end{align}
With these definitions made, we now give the main result of this paper.
\begin{theorem}[Spatial search on periodic graphs]
The unitary evolution
\begin{equation}
    \ket{\omega} = U_1^{(p_1-1)/2} U_2^{(p_2-1)/2} \ldots U_d^{(p_d-1)/2} \ket{s}
\end{equation}
maps the equal superposition $\ket{s}$ to a marked vertex $\ket{\omega}$, where
\begin{align}
    U_0 &= U_f(\pi) \, , \\
    U_k &= U_w(t_k) (U_{k-1})^{p_{k-1}} \, ,\\
    p_0 &= 1 \, , \\
    p_k &= \frac{\pi}{2\arccos{\frac{\braket{\omega}{\bar{\chi}_k}}{\sqrt{\braket{\omega}{s}^2 + \braket{\omega}{\chi_{k-1}}^2}}}} \qquad 1 \leq k \leq d \, . \label{eq:pk}
\end{align}
\label{thm:genproof}
\end{theorem}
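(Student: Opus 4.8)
The plan is to collapse the claimed evolution into a nested sequence of Grover-type two-dimensional rotations living inside the $(d+1)$-dimensional subspace
\[
\mathcal{S} = \spn\{\ket{s},\ket{\bar\chi_1},\ldots,\ket{\bar\chi_d}\}.
\]
The first observation is that $\ket{\omega}$ lies entirely in $\mathcal{S}$: since each $\ket{\bar\chi_k}$ is by definition the normalised projection of $\ket{\omega}$ onto the eigenspaces whose eigenvalues lie in $\bar\Lambda_k$, and the nonzero spectrum partitions as $\Lambda_0 = \bar\Lambda_1 \cup \cdots \cup \bar\Lambda_d$, we have $\ket{\omega} = \braket{\omega}{s}\ket{s} + \sum_{k=1}^{d}\braket{\omega}{\bar\chi_k}\ket{\bar\chi_k}$. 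I would then set up the decreasing flag $\mathcal{S}_k = \spn\{\ket{s},\ket{\bar\chi_{k+1}},\ldots,\ket{\bar\chi_d}\}$, so that $\mathcal{S}_0=\mathcal{S}$ and $\mathcal{S}_d=\spn\{\ket{s}\}$, together with the normalised projections $\ket{\omega_k}$ of $\ket{\omega}$ onto $\mathcal{S}_k$ and their norms $R_k = \lVert P_{\mathcal{S}_k}\ket{\omega}\rVert$, noting $\ket{\omega_0}=\ket{\omega}$ and $\ket{\omega_d}=\ket{s}$. Short computations then record $R_{k-1}^2 = \braket{\omega}{\bar\chi_k}^2 + R_k^2$, that $\braket{\bar\chi_k}{\omega_k}=0$, and that consecutive targets satisfy $\braket{\omega_{k-1}}{\omega_k} = R_k/R_{k-1}$.

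Next I would compute the action of the walk on the flag. Using \cref{eq:walkdualbasis} with the definitions of $t_k$, $\Lambda_k$ and $\bar\Lambda_k$, every eigenstate with eigenvalue in $\bar\Lambda_k$ acquires a factor $-1$ under $\uw{t_k}$, while those in $\Lambda_k = \bar\Lambda_{k+1}\cup\cdots\cup\bar\Lambda_d$ and the state $\ket{s}$ acquire $+1$; hence $\uw{t_k}$ preserves $\mathcal{S}_{k-1}$ and acts there as the reflection $\id - 2\ket{\bar\chi_k}\bra{\bar\chi_k}$. With this in hand I would prove by induction on $k$ the statement that $U_k$ preserves $\mathcal{S}_{k-1}$ and restricts there to the product of reflections $(\id - 2\ket{\bar\chi_k}\bra{\bar\chi_k})(\id - 2\ket{\omega_{k-1}}\bra{\omega_{k-1}})$, i.e.\ a rotation by $2\gamma_k$ in the plane $\spn\{\ket{\bar\chi_k},\ket{\omega_{k-1}}\}$, with $\cos\gamma_k = \braket{\omega}{\bar\chi_k}/R_{k-1}$ and $R_{k-1} = \sqrt{\braket{\omega}{s}^2 + \braket{\omega}{\chi_{k-1}}^2}$, so that $p_k\gamma_k = \tfrac{\pi}{2}$ by \cref{eq:pk}. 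The base case $k=1$ is immediate, since $(U_0)^{p_0} = \uf{\pi} = \id - 2\ket{\omega}\bra{\omega}$ and $\ket{\omega_0}=\ket{\omega}$.

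The crucial inductive step — and the part I expect to be the main obstacle — is showing that the inner iterate at full power collapses to a single reflection on the smaller subspace. From the hypothesis that $U_{k-1}$ is a rotation by $2\gamma_{k-1}$ in $\spn\{\ket{\bar\chi_{k-1}},\ket{\omega_{k-2}}\}$ together with $p_{k-1}\gamma_{k-1}=\tfrac{\pi}{2}$, the power $(U_{k-1})^{p_{k-1}}$ is a rotation by $\pi$, i.e.\ it negates that plane and fixes its orthogonal complement within $\mathcal{S}_{k-2}$. The key point is that this plane equals $\spn\{\ket{\bar\chi_{k-1}},\ket{\omega_{k-1}}\}$, where $\ket{\bar\chi_{k-1}}\perp\mathcal{S}_{k-1}$ while $\ket{\omega_{k-1}}\in\mathcal{S}_{k-1}$; restricting the $\pi$-rotation to $\mathcal{S}_{k-1}$ therefore annihilates the $\ket{\bar\chi_{k-1}}$ direction and leaves precisely $\id - 2\ket{\omega_{k-1}}\bra{\omega_{k-1}}$. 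Composing with $\uw{t_k}$ closes the induction. Here I must be careful about the meaning of the non-integer exponents $p_{k-1}$ and $(p_k-1)/2$, which I would interpret through the one-parameter rotation subgroup of each invariant plane; this is the idealised version of the evolution, with exactness of the rounded powers handled separately.

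Finally I would assemble the induction into the stated identity. Working in the orthonormal basis $\{\ket{\bar\chi_k},\ket{\omega_k}\}$ of the rotation plane $Q_k = \spn\{\ket{\bar\chi_k},\ket{\omega_{k-1}}\}$, a direct evaluation of the two reflections confirms that $U_k$ rotates by $2\gamma_k$, so $U_k^{(p_k-1)/2}$ rotates by $(p_k-1)\gamma_k = \tfrac{\pi}{2}-\gamma_k = \arccos(R_k/R_{k-1})$, which is exactly the angle between $\ket{\omega_k}$ and $\ket{\omega_{k-1}}$. The convention $\braket{\omega}{b}\ge 0$ fixes the sense of rotation so that $U_k^{(p_k-1)/2}\ket{\omega_k} = \ket{\omega_{k-1}}$. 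Reading the product in \cref{thm:genproof} from right to left and applying this relation at each level then yields $\ket{s}=\ket{\omega_d}\mapsto\ket{\omega_{d-1}}\mapsto\cdots\mapsto\ket{\omega_0}=\ket{\omega}$, which is the assertion of the theorem.
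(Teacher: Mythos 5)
Your proposal is correct and follows essentially the same route as the paper's own proof: the same intermediate states $\ket{\omega_k}$, the same inductive claim that $U_k$ restricts to a product of two Householder reflections (the paper's \cref{lem:lemma1}), the same collapse of $(U_{k-1})^{p_{k-1}}$ to the single reflection $\id - 2\ket{\omega_{k-1}}\bra{\omega_{k-1}}$, and the same half-power rotation $U_k^{(p_k-1)/2}\ket{\omega_k}=\ket{\omega_{k-1}}$ (the paper's \cref{lem:lemma2}), chained to give $\ket{\omega_d}=\ket{s}\mapsto\cdots\mapsto\ket{\omega_0}=\ket{\omega}$. The only difference is presentational: you argue synthetically with reflections acting on the invariant flag $\mathcal{S}_{k}$, whereas the paper writes out the $3\times 3$ matrix of $U_k$ in the basis $\{\ket{s},\ket{\chi_k},\ket{\bar\chi_k}\}$ and diagonalises it explicitly, your version having the minor virtue of making the subspace invariance and the sense of rotation explicit.
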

Before proceeding to prove \cref{thm:genproof}, we make some preliminary comments on the state evolution. For vertex-transitive graphs we have $\braket{\omega}{\chi_k} = \sqrt{\sum_{\lambda \in \Lambda_k} \mu_\lambda/N}$ and $\braket{\omega}{\bar{\chi}_k} = \sqrt{\sum_{\lambda \in \bar{\Lambda}_k} \mu_\lambda/N}$, where $\mu_\lambda$ is the algebraic multiplicity of eigenvalue $\lambda$. This makes the state evolution independent of the specific marked element, as would be expected from vertex-transitivity.

The parameter $d$, which affects the complexity of the state evolution, is bounded above by the number of unique non-zero eigenvalues. A `worst-case' example is $\Lambda_0=\{1, 2, 4, \ldots, 2^{l-1}\}$, resulting in $d=l$. However it is often significantly less than the upper bound, such as $\Lambda_0=\{ 1, 2, \ldots, (l-1)\}$ resulting in $d=\ceil{\log_2 l}$. Of particular note are integral graphs, which are defined by an integer-valued spectrum and have previously arisen in the studies of quantum perfect state transfer \cite{Harary1974,Omran2009}. For the integer spectrum case, $d$ is equal to the number of unique exponents of 2 in the prime factorisations of the non-zero eigenvalues.

The total number of iterations is dependent on the iteration counts $p_k$, where we have $p_k \geq 1$ for all $1 \leq k \leq d$. By comparison to the motivating Grover case given in \cref{eq:grover}, it is clear that  $d=1$ graphs achieve optimal quantum search, matching the query complexity of Grover's algorithm. More generally, the total number of phase-walk iterations (or equivalently, oracle queries) for the state evolution is
\begin{equation}
    N_\text{iter} = \frac{1}{2} \sum\limits_{k=1}^d (p_k-1) \prod\limits_{j=1}^{k-1} p_j = \frac{1}{2}\left(-1 + \prod\limits_{k=1}^d p_k \right) \, .
    \label{eq:niter}
\end{equation}
We will later show that for fixed $d$, $N_\text{iter} = \mathcal{O}(\sqrt{N})$.



\subsection{Proof of correctness}

In order to prove \cref{thm:genproof}, it is useful to first define the component of the marked vertex $\ket{\omega}$ projected onto $\ket{s}$ and $\ket{\chi_k}$,
\begin{equation}
    \ket{\omega_k} = \frac{1}{\sqrt{\braket{\omega}{s}^2 + \braket{\omega}{\chi_k}^2}}\left(\braket{\omega}{s} \ket{s} + \braket{\omega}{\chi_k} \ket{\chi_k}\right) \, .
\end{equation}
Observe that $\ket{\omega_d} = \ket{s}$ (since $\chi_d=\emptyset$), and $\ket{\omega_0} = \ket{\omega}$ (since $\chi_0$ is the set of all eigenstates except the equal superposition). In this section, we will show that the state evolution in \cref{thm:genproof} maps $\ket{\omega_d} \mapsto \ket{\omega_{d-1}} \mapsto \ldots \mapsto \ket{\omega_0}$, and thus evolves from the equal superposition to the marked vertex.

With this definition in hand, the following lemma describes the form of each iterate $U_k$.
\begin{lemma}
Restricted to the subspace spanned by $\{ \ket{s}, \ket{\chi_{k}}, \ket{\bar{\chi}_{k}}\}$, the iterate $U_k$ can be expressed as the composition of two Householder reflections
\begin{equation}
    U_k = (\mathbb{I} - 2 \ket{\bar{\chi}_k} \bra{\bar{\chi}_k})(\mathbb{I} - 2 \ket{\omega_{k-1}}\bra{\omega_{k-1}}) \, ,
\end{equation}
for $1 \leq k \leq d$.
\label{lem:lemma1}
\end{lemma}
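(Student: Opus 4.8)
The plan is to prove \cref{lem:lemma1} by induction on $k$, writing each reflection as $R_{\ket{\psi}} = \mathbb{I} - 2\ket{\psi}\bra{\psi}$ and factoring the restricted iterate inside the three-dimensional arena $V_k = \spn\{\ket{s},\ket{\chi_k},\ket{\bar\chi_k}\}$. First I would record the geometry that makes $V_k$ the right space. Since $\chi_{k-1}=\chi_k\sqcup\bar\chi_k$, projecting $\ket{\omega}$ gives the orthogonal (Pythagorean) split $\braket{\omega}{\chi_{k-1}}\ket{\chi_{k-1}} = \braket{\omega}{\chi_k}\ket{\chi_k} + \braket{\omega}{\bar\chi_k}\ket{\bar\chi_k}$, so $\ket{\omega_{k-1}}\in V_k$; and because $\bar\chi_{k-1}$ is disjoint, as a set of eigenstates, from $\{s\}\cup\chi_{k-1}$, we have $\ket{\bar\chi_{k-1}}\perp V_k$ and $\braket{\bar\chi_{k-1}}{\omega_{k-1}}=0$. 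The strategy is then to match the walk factor $\uw{t_k}$ to $R_{\ket{\bar\chi_k}}$ and the power factor $(U_{k-1})^{p_{k-1}}$ to $R_{\ket{\omega_{k-1}}}$.

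The walk factor is immediate. From the spectral form of $\uw{t_k}$, together with $\lambda_0=0$ for $\ket{s}$ and the defining phases $e^{-it_k\lambda}=+1$ on $\Lambda_k$ and $=-1$ on $\bar\Lambda_k$, the space $V_k$ is invariant and $\uw{t_k}|_{V_k} = \mathbb{I}-2\ket{\bar\chi_k}\bra{\bar\chi_k}=R_{\ket{\bar\chi_k}}$. This also settles the base case $k=1$, where $p_0=1$ gives $U_1=\uw{t_1}U_f(\pi)$ with $U_f(\pi)=\mathbb{I}-2\ket{\omega}\bra{\omega}=R_{\ket{\omega_0}}$ and $\ket{\omega_0}=\ket{\omega}\in V_1$.

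The power factor is the heart of the induction. By \cref{lem:lemma1} at level $k-1$, $U_{k-1}$ restricted to $V_{k-1}$ is $R_{\ket{\bar\chi_{k-1}}}R_{\ket{\omega_{k-2}}}$, a rotation through $2\phi_{k-1}$ in the plane $P_{k-1}=\spn\{\ket{\bar\chi_{k-1}},\ket{\omega_{k-2}}\}\subseteq V_{k-1}$ (identity on the remaining direction of $V_{k-1}$), with $\cos\phi_{k-1}=\braket{\bar\chi_{k-1}}{\omega_{k-2}}$. A short computation using the split above gives $\cos\phi_{k-1}=\braket{\omega}{\bar\chi_{k-1}}/\sqrt{\braket{\omega}{s}^2+\braket{\omega}{\chi_{k-2}}^2}$, exactly the argument of the $\arccos$ in \cref{eq:pk}; hence $p_{k-1}=\pi/(2\phi_{k-1})$ and $(U_{k-1})^{p_{k-1}}$ is a rotation through $\pi$ in $P_{k-1}$, i.e.\ $\mathbb{I}-2\Pi_{k-1}$ with $\Pi_{k-1}$ the projector onto $P_{k-1}$. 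Re-expressing $\ket{\omega_{k-2}}$ through $\ket{\omega_{k-1}}$ and $\ket{\bar\chi_{k-1}}$ shows $P_{k-1}=\spn\{\ket{\omega_{k-1}},\ket{\bar\chi_{k-1}}\}$ with these vectors orthonormal; restricting to $V_k$, where $\ket{\bar\chi_{k-1}}\perp V_k$, removes that term and leaves $(U_{k-1})^{p_{k-1}}|_{V_k}=R_{\ket{\omega_{k-1}}}$. Composing the two factors yields $U_k|_{V_k}=R_{\ket{\bar\chi_k}}R_{\ket{\omega_{k-1}}}$, as claimed.

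The main obstacle is that the induction hypothesis controls $U_{k-1}$ only on $V_{k-1}$, whereas the previous paragraph silently evaluates $(U_{k-1})^{p_{k-1}}$ on $V_k$, and $V_k\ne V_{k-1}$: the vector $\ket{\chi_{k-1}^\perp}$ (the part of $\spn\{\ket{\chi_k},\ket{\bar\chi_k}\}$ orthogonal to $\ket{\chi_{k-1}}$) lies in $V_k$ but is orthogonal to $V_{k-1}$, and on the full Hilbert space $U_{k-1}$ is not a bare planar rotation—its walk factors negate entire eigenspaces—so I cannot simply assert that it acts as the identity off $P_{k-1}$. To close this gap I would enlarge the arena to $V'=V_k+P_{k-1}=\spn\{\ket{s},\ket{\chi_k},\ket{\bar\chi_k},\ket{\bar\chi_{k-1}}\}$ and prove that $U_{k-1}$ fixes $\ket{\chi_{k-1}^\perp}$. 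This holds because every elementary factor composing $U_{k-1}$ fixes $\ket{\chi_{k-1}^\perp}$: the oracle $U_f(\pi)$ does since $\braket{\omega}{\chi_{k-1}^\perp}=0$, and each $\uw{t_j}$ with $j\le k-1$ does since $\ket{\chi_{k-1}^\perp}$ lies in the $\chi_{k-1}$ eigenspace, whose eigenvalues belong to $\Lambda_{k-1}\subseteq\Lambda_j$ and therefore carry phase $+1$ under $t_j$. Hence $U_{k-1}$ acts on $V'$ as a rotation in $P_{k-1}$ and the identity on its orthogonal complement inside $V'$ (which contains $\ket{\chi_{k-1}^\perp}$), precisely what the planar-rotation argument requires, and the restriction to $V_k\subseteq V'$ then goes through as above.
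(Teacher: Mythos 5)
Your proof is correct and follows the same overall inductive skeleton as the paper (base case $U_1=\uw{t_1}\uf{\pi}$, then walk factor $\mapsto$ reflection about $\ket{\bar\chi_k}$ and accumulated power $\mapsto$ reflection about $\ket{\omega_{k-1}}$), but the inductive step is executed quite differently. The paper writes out the explicit $3\times 3$ matrix of $U_k$ in the basis $\{\ket{s},\ket{\chi_k},\ket{\bar\chi_k}\}$, diagonalises it, and computes $U_k^{p_k}$ entrywise; you instead invoke the abstract fact that a product of two Householder reflections is a planar rotation through twice the angle between the mirrors, so that $(U_{k-1})^{p_{k-1}}$ is a rotation through $\pi$, i.e.\ $\mathbb{I}-2\Pi_{k-1}$. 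This buys you two things the paper elides. First, you keep the full projector $\Pi_{k-1}$ onto $\spn\{\ket{\omega_{k-1}},\ket{\bar\chi_{k-1}}\}$ rather than writing $(U_{k-1})^{p_{k-1}}=\mathbb{I}-2\ket{\omega_{k-1}}\bra{\omega_{k-1}}$ outright: the residual $-1$ on the $\ket{\bar\chi_{k-1}}$ direction is visible in the paper's own matrix for $U_k^{p_k}$ but is dropped in the displayed identification, and the paper only acknowledges it much later (\cref{sec:exactu1}); your observation that $\ket{\bar\chi_{k-1}}\perp V_k$ disposes of it cleanly at the point where it matters. Second, you correctly identify that the induction hypothesis only constrains $U_{k-1}$ on $V_{k-1}\ne V_k$, and you supply the missing invariance argument for the extra direction $\ket{\chi_{k-1}^\perp}$ (orthogonal to $\ket{\omega}$ and living in eigenspaces with eigenvalues in $\Lambda_{k-1}\subseteq\Lambda_j$, hence fixed by every constituent of $U_{k-1}$); the paper passes between the bases $\{\ket{s},\ket{\chi_k},\ket{\bar\chi_k}\}$ and $\{\ket{s},\ket{\chi_{k+1}},\ket{\bar\chi_{k+1}}\}$ without comment. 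One small point of rigour: since the $p_j$ are generally non-integer, $U_{k-1}$ is not literally a word in the elementary factors $\uf{\pi}$ and $\uw{t_j}$, so your ``every elementary factor fixes $\ket{\chi_{k-1}^\perp}$'' should be phrased as a nested induction — each $U_j$ fixes the vector, and a vector fixed by a unitary is fixed by any spectrally defined power of it — but this is a presentational fix, not a gap.
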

\begin{proof}
We prove by induction. First consider the base case, where $k=1$. By definition, $U_1 = \uw{t_1} \uf{\pi}$. We have $\ket{\omega} = \ket{\omega_0}$. Then clearly
\begin{equation}
    \uf{\pi} = \mathbb{I} - 2 \ket{\omega} \bra{\omega} = \mathbb{I} - 2 \ket{\omega_0} \bra{\omega_0} \, .
\end{equation}
Following from the definition of the walk time $t_1$ and \cref{eq:gendef}, where the phases of all Laplacian eigenstates $\ket{b} \in \bar{\chi}_1$ are flipped,
\begin{equation}
    \uw{t_1} = \mathbb{I} - 2 \ket{\bar{\chi}_1} \bra{\bar{\chi}_1}
\end{equation}
as required, fulfilling the base case.

For the inductive step, it is convenient to express $U_k$ as a matrix in the basis $\{ \ket{s}, \ket{\chi_k}, \ket{\bar{\chi}_k}\}$,
\begin{widetext}
\begin{align}
    U_k = \begin{pmatrix}
    1 - \frac{2 \braket{\omega}{s}^2}{\braket{\omega}{s}^2 + \braket{\omega}{\chi_{k-1}}^2} & -\frac{2 \braket{\omega}{\chi_k} \braket{\omega}{s}}{\braket{\omega}{s}^2 + \braket{\omega}{\chi_{k-1}}^2} & -\frac{2 \braket{\omega}{\bar{\chi}_k} \braket{\omega}{s}}{\braket{\omega}{s}^2 + \braket{\omega}{\chi_{k-1}}^2} \\
    -\frac{2 \braket{\omega}{\chi_k} \braket{\omega}{s}}{\braket{\omega}{s}^2 + \braket{\omega}{\chi_{k-1}}^2} & 1 - \frac{2 \braket{\omega}{\chi_k}^2}{\braket{\omega}{s}^2 + \braket{\omega}{\chi_{k-1}}^2} & -\frac{2 \braket{\omega}{\chi_k} \braket{\omega}{\bar{\chi}_k}}{\braket{\omega}{s}^2 + \braket{\omega}{\chi_{k-1}}^2} \\
    \frac{2 \braket{\omega}{\bar{\chi}_k} \braket{\omega}{s}}{\braket{\omega}{s}^2 + \braket{\omega}{\chi_{k-1}}^2} & \frac{2 \braket{\omega}{\chi_k} \braket{\omega}{\bar{\chi}_k}}{\braket{\omega}{s}^2 + \braket{\omega}{\chi_{k-1}}^2} & -1 + \frac{2 \braket{\omega}{\bar{\chi}_k}^2}{\braket{\omega}{s}^2 + \braket{\omega}{\chi_{k-1}}^2}
    \end{pmatrix} \, .
    \label{eq:bigmatrix}
\end{align}
\end{widetext}
This operator has eigenphases $0$ and $\pm \lambda^{(k)}$, where
\begin{align}
    \lambda^{(k)} &= 2 \arccos\frac{\braket{\omega}{\bar{\chi}_k}}{\sqrt{\braket{\omega}{s}^2 + \braket{\omega}{\chi_{k-1}}^2}}
\end{align}
notably appears as the denominator of $p_k$ in \cref{thm:genproof}.
The corresponding eigenstates are
\begin{align}
    \ket{v_0^{(k)}} &= \frac{1}{\sqrt{\braket{\omega}{s}^2 + \braket{\omega}{\chi_k}^2}}\left(\braket{\omega}{\chi_k} \ket{s} - \braket{\omega}{s} \ket{\chi_k}\right) \, , \\
    \ket{v_\pm^{(k)}} &= \frac{1}{\sqrt{2}} \left( \ket{\bar{\chi}_k} \pm i \ket{\omega_k} \right) \, .
    \label{eq:geneigstate}
\end{align}
The eigensystem can be used to diagonalise $U_k$ and compute $U_k^{p_k}$ with $p_k = \frac{\pi}{\lambda^{(k)}}$, resulting in
\begin{align}
    U_k^{p_k} &= \begin{pmatrix}
 1-\frac{2 \braket{\omega}{s}^2}{\braket{\omega}{s}^2+\braket{\omega}{\chi_k}^2} & -\frac{2 \braket{\omega}{s} \braket{\omega}{\chi_k}}{\braket{\omega}{s}^2+\braket{\omega}{\chi_k}^2} & 0 \\
 -\frac{2 \braket{\omega}{s} \braket{\omega}{\chi_k}}{\braket{\omega}{s}^2+\braket{\omega}{\chi_k}^2} & -1 + \frac{2 \braket{\omega}{s}^2}{\braket{\omega}{s}^2+\braket{\omega}{\chi_k}^2} & 0 \\
 0 & 0 & -1 
\end{pmatrix} \\
&= \mathbb{I} - 2 \ket{\omega_k} \bra{\omega_k} \, .
\label{eq:upowerpk}
\end{align}
Therefore by performing $p_k$ iterations of $U_k$, we create a Householder reflection about the $\ket{\omega_k}$ state. Furthermore recall that $\chi_k = \chi_{k+1} \cup \bar{\chi}_{k+1}$, and so restricted to the basis $\{ \ket{s}, \ket{\chi_{k+1}}, \ket{\bar{\chi}_{k+1}}\}$,
\begin{equation}
    \uw{t_{k+1}}=\mathbb{I} - 2\ket{\chi_{k+1}} \bra{\chi_{k+1}} \, .
\end{equation}
Thus, we have
\begin{align}
    U_{k+1} &= \uw{t_{k+1}} U_k^{p_k} \\
    &= (\mathbb{I} - 2\ket{\chi_{k+1}} \bra{\chi_{k+1}}) (\mathbb{I} - 2 \ket{\omega_k} \bra{\omega_k})
\end{align}
as required, proving \cref{lem:lemma1}.
\end{proof}

Having presented the general form of $U_k$ for $1 \leq k \leq d$, we now demonstrate its utility.
\begin{lemma}
For $1 \leq k \leq d$,
\begin{equation}
    U_k^{(p_k-1)/2} \ket{\omega_k} = \ket{\omega_{k-1}} \, .
\end{equation}
\label{lem:lemma2}
\end{lemma}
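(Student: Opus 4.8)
The plan is to exploit the fact, established in \cref{lem:lemma1}, that $U_k$ is a product of two reflections and therefore acts as a planar rotation by the fixed angle $\lambda^{(k)}$ inside the two-dimensional subspace spanned by the orthonormal pair $\{\ket{\omega_{k}}, \ket{\bar{\chi}_k}\}$. These two vectors are orthonormal because $\ket{\omega_k}$ lies in $\spn\{\ket{s},\ket{\chi_k}\}$, which is orthogonal to $\ket{\bar{\chi}_k}$. The strategy is to show that $\ket{\omega_{k-1}}$ is precisely the image of $\ket{\omega_k}$ under a rotation through the angle $\tfrac12(\pi - \lambda^{(k)})$ in this plane, and then to verify that $(p_k-1)/2$ applications of $U_k$ implement exactly that rotation.

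First I would rewrite both $\ket{\omega_{k-1}}$ and $\ket{\omega_k}$ in the common basis $\{\ket{s}, \ket{\chi_k}, \ket{\bar{\chi}_k}\}$, using $\chi_{k-1} = \chi_k \cup \bar{\chi}_k$ (hence $\braket{\omega}{\chi_{k-1}}^2 = \braket{\omega}{\chi_k}^2 + \braket{\omega}{\bar{\chi}_k}^2$). Since $\braket{\omega}{s}\ket{s} + \braket{\omega}{\chi_k}\ket{\chi_k} = \sqrt{\braket{\omega}{s}^2+\braket{\omega}{\chi_k}^2}\,\ket{\omega_k}$, this gives
\[
\ket{\omega_{k-1}} = \frac{\sqrt{\braket{\omega}{s}^2 + \braket{\omega}{\chi_k}^2}\,\ket{\omega_k} + \braket{\omega}{\bar{\chi}_k}\,\ket{\bar{\chi}_k}}{\sqrt{\braket{\omega}{s}^2+\braket{\omega}{\chi_{k-1}}^2}} \, .
\]
Reading off $\braket{\omega_{k-1}}{\bar{\chi}_k} = \braket{\omega}{\bar{\chi}_k}/\sqrt{\braket{\omega}{s}^2+\braket{\omega}{\chi_{k-1}}^2} = \cos(\lambda^{(k)}/2)$ (by the definition of $\lambda^{(k)}$) and recalling $\ket{\omega_k}\perp\ket{\bar{\chi}_k}$, the oriented angle from $\ket{\omega_k}$ to $\ket{\omega_{k-1}}$ is exactly $\pi/2 - \lambda^{(k)}/2$.

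Next I would pass to the eigenbasis $\ket{v_\pm^{(k)}} = \tfrac{1}{\sqrt2}(\ket{\bar{\chi}_k}\pm i\ket{\omega_k})$ of \cref{eq:geneigstate}, expand $\ket{\omega_k}$ in the $\ket{v_\pm^{(k)}}$, and apply $U_k^{(p_k-1)/2}$ via the eigenphases $\pm\lambda^{(k)}$. The one nontrivial simplification is the identity $(p_k-1)\lambda^{(k)}/2 = (\pi - \lambda^{(k)})/2$, which follows at once from $p_k\lambda^{(k)} = \pi$. Collecting terms collapses $U_k^{(p_k-1)/2}\ket{\omega_k}$ to $\cos\!\big(\tfrac{\pi-\lambda^{(k)}}{2}\big)\ket{\omega_k} + \sin\!\big(\tfrac{\pi-\lambda^{(k)}}{2}\big)\ket{\bar{\chi}_k} = \sin(\lambda^{(k)}/2)\ket{\omega_k} + \cos(\lambda^{(k)}/2)\ket{\bar{\chi}_k}$, and the half-angle values $\cos(\lambda^{(k)}/2) = \braket{\omega}{\bar{\chi}_k}/\sqrt{\braket{\omega}{s}^2+\braket{\omega}{\chi_{k-1}}^2}$ and $\sin(\lambda^{(k)}/2) = \sqrt{\braket{\omega}{s}^2+\braket{\omega}{\chi_k}^2}/\sqrt{\braket{\omega}{s}^2+\braket{\omega}{\chi_{k-1}}^2}$ make this coincide term-for-term with the decomposition of $\ket{\omega_{k-1}}$ found above.

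I expect the main obstacle to be the bookkeeping of signs and orientation: ensuring the iterations carry $\ket{\omega_k}$ toward $\ket{\omega_{k-1}}$ rather than toward its reflection $\sin(\lambda^{(k)}/2)\ket{\omega_k} - \cos(\lambda^{(k)}/2)\ket{\bar{\chi}_k}$. This hinges on correctly pairing the eigenphase $+\lambda^{(k)}$ with $\ket{v_+^{(k)}}$ (and $-\lambda^{(k)}$ with $\ket{v_-^{(k)}}$), together with the standing convention $\braket{\omega}{b}\geq 0$ that keeps all coefficients nonnegative. Once the rotation picture and this sign choice are fixed, the remainder is a routine trigonometric collapse.
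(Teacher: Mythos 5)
Your proposal is correct and follows essentially the same route as the paper: both use the eigenpair $\ket{v_\pm^{(k)}}$ to see that $U_k$ rotates in the plane $\spn\{\ket{\omega_k},\ket{\bar{\chi}_k}\}$ by $\lambda^{(k)}$ per application, identify $(p_k-1)\lambda^{(k)}/2 = \pi/2 - \lambda^{(k)}/2$, and match the resulting state against the decomposition $\ket{\omega_{k-1}} = \sin(\lambda^{(k)}/2)\ket{\omega_k} + \cos(\lambda^{(k)}/2)\ket{\bar{\chi}_k}$ obtained from $\chi_{k-1}=\chi_k\cup\bar{\chi}_k$. Your explicit bookkeeping of the orientation and of the half-angle identities is slightly more detailed than the paper's, but the argument is the same.
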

\begin{proof}
By observation of the $\ket{v_\pm^{(k)}}$ eigenstate, clearly $U_k$ induces a rotation between $\ket{\omega_k}$ and $\ket{\bar{\chi}_k}$. Specifically, for any $p$,
\begin{equation}
    U_k^p \ket{\omega_k} = \cos p \lambda_+^{(k)} \ket{\omega_k} + \sin p \lambda_+^{(k)} \ket{\bar{\chi}_k} \, .
    \label{eq:genericpower}
\end{equation}
By choosing
\begin{align}
    p&=\frac{1}{\lambda^{(k)}}\arcsin{\frac{\braket{\omega}{\bar{\chi}_k}}{\sqrt{ \braket{\omega}{s}^2 + \braket{\omega}{\chi_{k-1}}^2}}} \\
    &= \frac{1}{\lambda^{(k)}}\left( \frac{\pi}{2} - \frac{\lambda^{(k)}}{2}\right) = \frac{1}{2}(p_k-1) \, ,
\end{align}
we obtain
\begin{equation}
    (U_k)^{(p_k-1)/2} \ket{\omega_k} = \ket{\omega_{k-1}} \, .
\end{equation}
Here we have again used the property that $\bar{\chi}_{k-1} = \chi_{k} \cup \bar{\chi}_{k}$, and consequently $\braket{\omega}{\chi_k} \ket{\chi_k} + \braket{\omega}{\chi_k} \ket{\bar{\chi}_k} = \braket{\omega}{\chi_{k-1}} \ket{\chi_{k-1}}$. Thus, \cref{lem:lemma2} holds.
\end{proof}

\cref{thm:genproof} directly follows from mapping $\ket{\omega_d} \mapsto \ket{\omega_{d-1}} \mapsto \ldots \mapsto \ket{\omega_0}$, where $\ket{\omega_d} = \ket{s}$ and $\ket{\omega_0} = \ket{\omega}$. Thus we see that on a periodic graph, one can compose quantum walks and marked vertex phase shifts to create Householder reflections that iteratively rotate the equal superposition to the marked vertex.

\subsection{Time complexity}
\label{sec:efficiency}
    
In this section, we discuss the efficiency of the state evolution. The most sensible measure is the number of calls to $U_f(\theta)$ (the query complexity), or equivalently the total number of phase-walk iterations $N_\text{iter}$. This value is directly comparable to the evolution time used by the \cg algorithm, since simulating the search Hamiltonian for time $T$ is equivalent to performing $\mathcal{O}(T)$ oracle queries \cite{Roland2003}. Thus, in the following we determine the worst-case query complexity $N_\text{iter}$ in terms of the graph parameter $d$ and the number of vertices $N$.

The expression for $N_\text{iter}$ given in \cref{eq:niter} has $d$ independent variables $N$ and $\braket{\omega}{\chi_k}$ for $k=1, \ldots, d-1$. Re-writing in terms of these variables using $\braket{\omega}{\bar{\chi}_k} = \sqrt{ \braket{\omega}{\chi_{k-1}}^2 - \braket{\omega}{\chi_k}^2}$ and $\braket{\omega}{s} = 1/\sqrt{N}$,
\begin{equation}
    p_k = \frac{\pi}{2 \arccos{\sqrt{\frac{\braket{\omega}{\chi_{k-1}}^2 - \braket{\omega}{\chi_k}^2}{1/N + \braket{\omega}{\chi_{k-1}}^2}}}} \, .
\end{equation}
The partial derivative $\frac{\partial N_\text{iter}}{\partial \braket{\omega}{\chi_j}}$ involves two terms from the product in \cref{eq:niter}. Solving for $\frac{\partial N_\text{iter}}{\partial \braket{\omega}{\chi_j}}=0$ produces the set of equations
\begin{equation}
    p_j \frac{\partial p_{j+1}}{\partial \braket{\omega}{\chi_j}} + p_{j+1} \frac{\partial p_j}{\partial \braket{\omega}{\chi_j}} = 0
\end{equation}
for $1 \leq j \leq {d-1}$. Evaluating the derivative and simplifying results in
\begin{equation}
    p_j \tan \frac{\pi}{2 p_j} = p_{j+1} \tan \frac{\pi}{2 p_{j+1}} \, .
\end{equation}
Thus, it follows that the extremum of $N_\text{iter}$ occurs when $p_1 = p_2 = \ldots = p_d = p$, producing
\begin{equation}
    p = \frac{\pi}{2\arccos{\sqrt{1 - N^{-1/d}}}} \, .
\end{equation}
Substituting and evaluating the series expansion of $N_\text{iter}$ for large $N$,
\begin{align}
    N_\text{iter} &= \frac{1}{2}\left(-1 + \left(\frac{\pi}{2}\right)^d\frac{1}{\arccos^{d}\sqrt{1 - N^{-1/d}}}\right) \\
    &= \frac{1}{2}\left(\frac{\pi}{2}\right)^d \sqrt{N} - \frac{1}{2} - \mathcal{O}(d \, N^{1/2 - 1/d}) \, .
\end{align}
Hence, asymptotically optimal quantum search is always achieved on periodic graphs when $d$ is independent of $N$. Recall that $d$ is bounded above by the number of distinct eigenvalues. Furthermore, one can obtain optimal search up to polylogarithmic factors if $d$ grows sufficiently slowly with respect to $N$. As a significant example of the latter case, the $n$-dimensional hypercube $\mathbb{Q}_n$ with $N=2^n$ vertices has distinct eigenvalues $\Lambda = \{ 0, 2, \ldots, 2n \}$. For this class of graphs, $d=\ceil{\log_2(n+1)} = \mathcal{O}(\log\log N)$ and therefore $N_\text{iter}=\mathcal{O}(\sqrt{N} \log N)$.

\begin{figure}
    \centering
    \includegraphics[width=\linewidth]{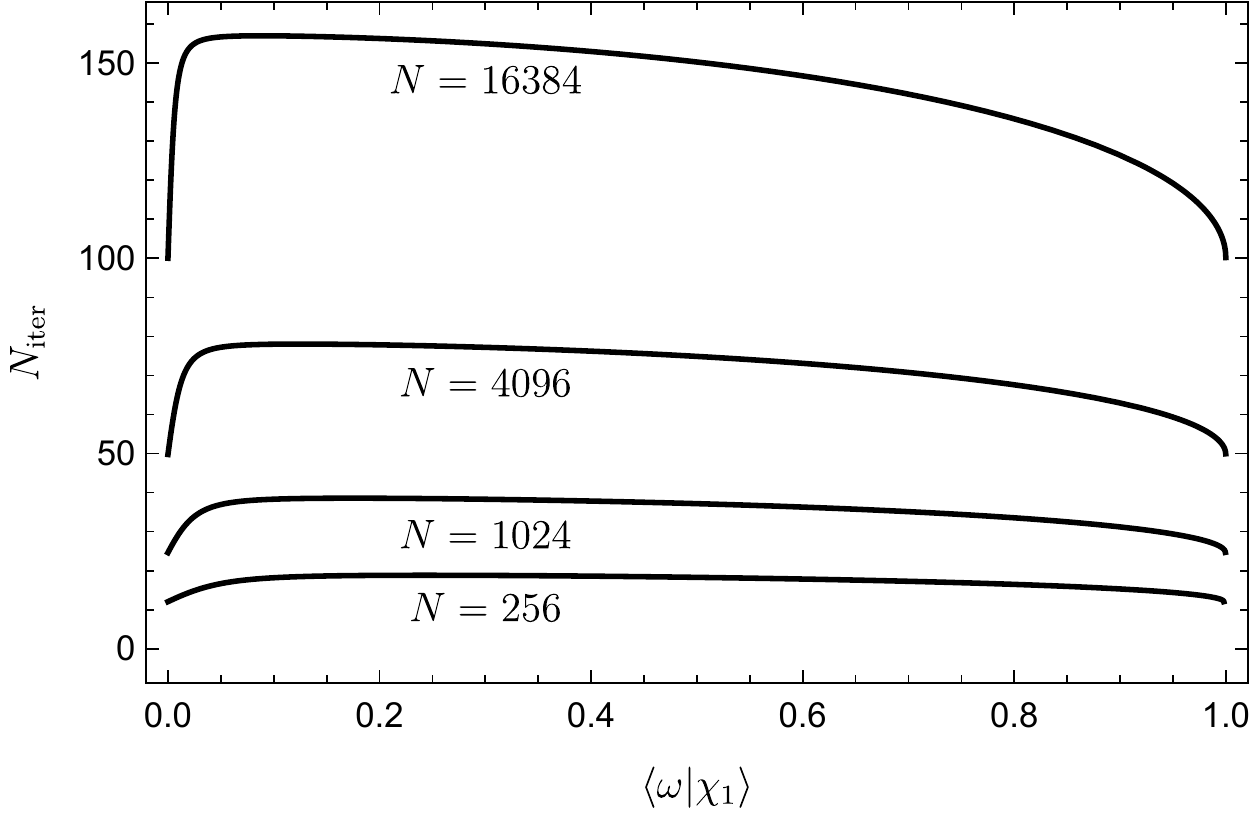}
    \caption{The total number of phase-walk iterations for the $d=2$ case, depending on the marked element overlap with $\ket{\chi_1}$ as per \cref{eq:pk} and \cref{eq:niter}.}
    \label{fig:overlap-vs-queries}
\end{figure}


To reinforce the results obtained in this section, we plot the total iterations used for any $d=2$ graph in \cref{fig:overlap-vs-queries}, for $0 \leq \braket{\omega}{\chi_1} \leq \sqrt{1 - N^{-1}}$. For the boundary cases $\braket{\omega}{\chi_1}= 0$ and $\braket{\omega}{\chi_1}= \sqrt{1-N^{-1}}$, we see that $N_\text{iter}$ reduces to the $d=1$ case. The maximum, which occurs when $p_1 = p_2$, results in approximately $\frac{\pi}{2}$ more iterations than the $d=1$ Grover case.

\subsection{Controlling the rotation rate}
\label{sec:exactu1}

Although the state evolution given in \cref{thm:genproof} holds exactly for any periodic graph, issues can arise in practice due to the requirement that the number of iterations must be integer-valued. For each $k$, rounding $p_k$ and $\frac{1}{2}(p_k-1)$ to the nearest integers introduces error, and compounding error from sub-iterates inside sub-iterates can eliminate the desired $\mathcal{O}(1)$ overlap with the marked vertex, especially for higher $d$. However, in this section we propose a way to remove this compounding error from $U_1^{p_1}$ for $d=2$, motivating research into the general case. To be explicit, the $d=2$ state evolution can be written as
\begin{align}
    \ket{\omega} &= U_1^{(p_1-1)/2} U_2^{(p_2-1)/2} \ket{s} \\
    &= U_1^{(p_1-1)/2} \left(\uw{t_2} U_1^{p_1} \right)^{(p_2-1)/2} \ket{s} \, ,
\end{align}
where $U_1 = \uw{t_1} \uf{\pi}$.
We provide a technique to implement the $U_1^{p_1}$ component exactly. This technique is inspired by the approach described in \cite{Long2001} to make the Grover state evolution deterministic. Consider the generalised marked element phase rotation
\begin{align}
    U_f(\theta) = e^{-i \theta \ket{\omega} \bra{\omega}} = \mathbb{I} + (-1 + e^{i \theta}) \ket{\omega} \bra{\omega} \, .
\end{align}
The key observation is that generalising $U_f(\pi) \mapsto U_f(\theta)$ does not affect the 3-dimensional evolution subspace from \cref{lem:lemma1}, i.e.
\begin{equation}
    \left(\prod_j U_w(t_1) U_f(\theta_j)\right) \ket{s} \in \spn\{ \ket{s}, \ket{\chi_1}, \ket{\bar{\chi}_1}\} \, .
\end{equation}
We will utilise this property to eliminate error from $U_1^{p_1}$. Define
\begin{equation}
    U_1(\theta) = U_w(t_1) U_f(-\theta) U_w(t_1) U_f(\theta) \, ,
\end{equation}
which has eigenphases $0$ and $\pm \lambda_\theta^{(1)}$, where
\begin{align}
    \lambda^{(1)}_\theta &= 2 \arcsin\left( \sin\frac{\theta}{2} \sin\frac{\pi}{p_1}\right)
\end{align}
using $p_1 = \frac{\pi}{2\arccos\braket{\omega}{\bar{\chi}_1}}$. Clearly we have $\lambda^{(1)}_\pi =2\lambda^{(1)}$, since $U_1(\pi) = U_1^2$.
The eigenstates of $U_1(\theta)$ (left unnormalised) are
\begin{align}
    \ket{v_0^{(1)}} =& \braket{\omega}{\chi_1} \ket{s} - \braket{\omega}{s} \ket{\chi_1} \, ,\\
    \ket{v_\pm^{(1)}} =& \braket{\omega}{s} \ket{s} + \braket{\omega}{\chi_1} \ket{\chi_1} \\
    &+ e^{-i \arctan\left(\tan\frac{\theta }{2} \cos\frac{\pi }{p_1}\right)} \sin \frac{\pi}{2p_1} \ket{\bar{\chi}_1} \, .
\end{align}

Calculating $U_1(\theta)^{p_1'}$, where $p_1' = \frac{\pi}{\lambda^{(1)}_\theta}$, we obtain an identical unitary to \cref{eq:upowerpk}, with the exception of a phase difference on $\bra{\bar{\chi}_1}U_1^{p_1}\ket{\bar{\chi}_1}$. This phase difference does not affect the evolution, since by construction $\abs{\bra{\bar{\chi}_1} (\uw{t_2} U_1^{p_1})^k \ket{s}} = 0$ for any $k>0$.

With this new parametrised iterate, the necessary number of iterations $p_1'$ is a function of $\theta$, with
\begin{equation}
    p_1' = \frac{\pi}{2 \arcsin\left( \sin\frac{\theta}{2} \sin\frac{\pi}{p_1}\right)} \, .
\end{equation}
Solving for $\theta$ gives
\begin{equation}
    \theta_{p_1'} = 2\arcsin\left(\sin\frac{\pi}{2p_1'} \csc\frac{\pi}{p_1}\right) \, .
\end{equation}
Thus, if $p_1 \geq 2$, setting $p_1' = \ceil{p_1/2}$ suffices to ensure that $\theta_{p_1'}$ is real-valued and we are able to perform $U_1^{p_1}= U_1(\theta_{p_1'})^{p_1'}$ exactly, using an integer number of phase-walk iterations $2\ceil{p_1/2}$.

However, if $1 < p_1 < 2$ then the above technique scales poorly since in order to make $\theta_{p_1'}$ real-valued, we require $p_1' \geq \ceil{\frac{p_1}{2(p_1 - 1)}}$, which diverges for $p_1 \rightarrow 1$. To circumvent this issue with small iteration counts, we can instead directly apply the unitary
\begin{equation}
    U_1^{p_1} = \uw{t_1} \uf{\theta} \uw{t_1} \uf{\phi} \uw{t_1} \uf{\theta} 
\end{equation}
where
\begin{align}
    \theta &= 2\arcsin\left( \frac{1}{2}\csc\frac{\pi}{2p_1}\right) \, , \\
    \phi &= -2\arctan\left( \tan\frac{\theta}{2}\sec\frac{\pi}{p_1}\right)\, .
\end{align}
The relationship between the two angles is comparable to the phase matching condition for Grover-searching with certainty, described in \cite{Hoyer2000}. It is straightforward to verify, by use of the matrix expression in \cref{eq:bigmatrix}, that this series of three phase-walk iterations again acts equivalently to $U_1^{p_1}$ for any iteration count $1 \leq p_1 \leq 3$. The angles are real-valued for all $1 \leq p_1 \leq 3$. Thus, $U_1^{p_1}$ can always be implemented exactly for non-integer $p_1$ without affecting the asymptotic complexity of the state evolution.

We anticipate that this idea can be generalised to all $U_k$ using a series of controlled marked vertex phase shifts $\theta_j$, eliminating the error entirely from the state evolution and achieving deterministic quantum spatial search on any periodic graph. Our preliminary numerical work via optimisation of phase parameters along with prior analytical work support this conjecture. We have shown \textit{fully} deterministic search is shown to be achievable on a $d=2$ class of interdependent networks \cite{marsh2021}. This shows that it is also possible to use generalised phase shifts to additionally implement $U_1^{(p_1-1)/2}$ and $U_2^{(p_2-1)/2}$ exactly, on at least one nontrivial class of graphs.


Throughout the following sections of this paper, we will assume the `exact' version of $U_1^{p_1}$ is used to improve the overall success probability.

\subsection{Success probability}
\label{sec:success}

As discussed in the previous section, the fact that error is introduced by rounding $p_k$ and $(p_k-1)/2$ poses a problem. Rather than classifying the graphs for which we have $\mathcal{O}(1)$ success probability at the end of the algorithm, we anticipate that a more appropriate route is to circumvent the discussion entirely by controlling the rate of rotation and making the evolution exact, as discussed in \cref{sec:exactu1}. We leave this improvement to future work, and instead show that by using the exact $U_1^{p_1}$ implementation described in the previous section, one can always obtain $\mathcal{O}(1)$ success probability on any $d=2$ graph.

\begin{figure}
    \centering
    \includegraphics[width=\linewidth]{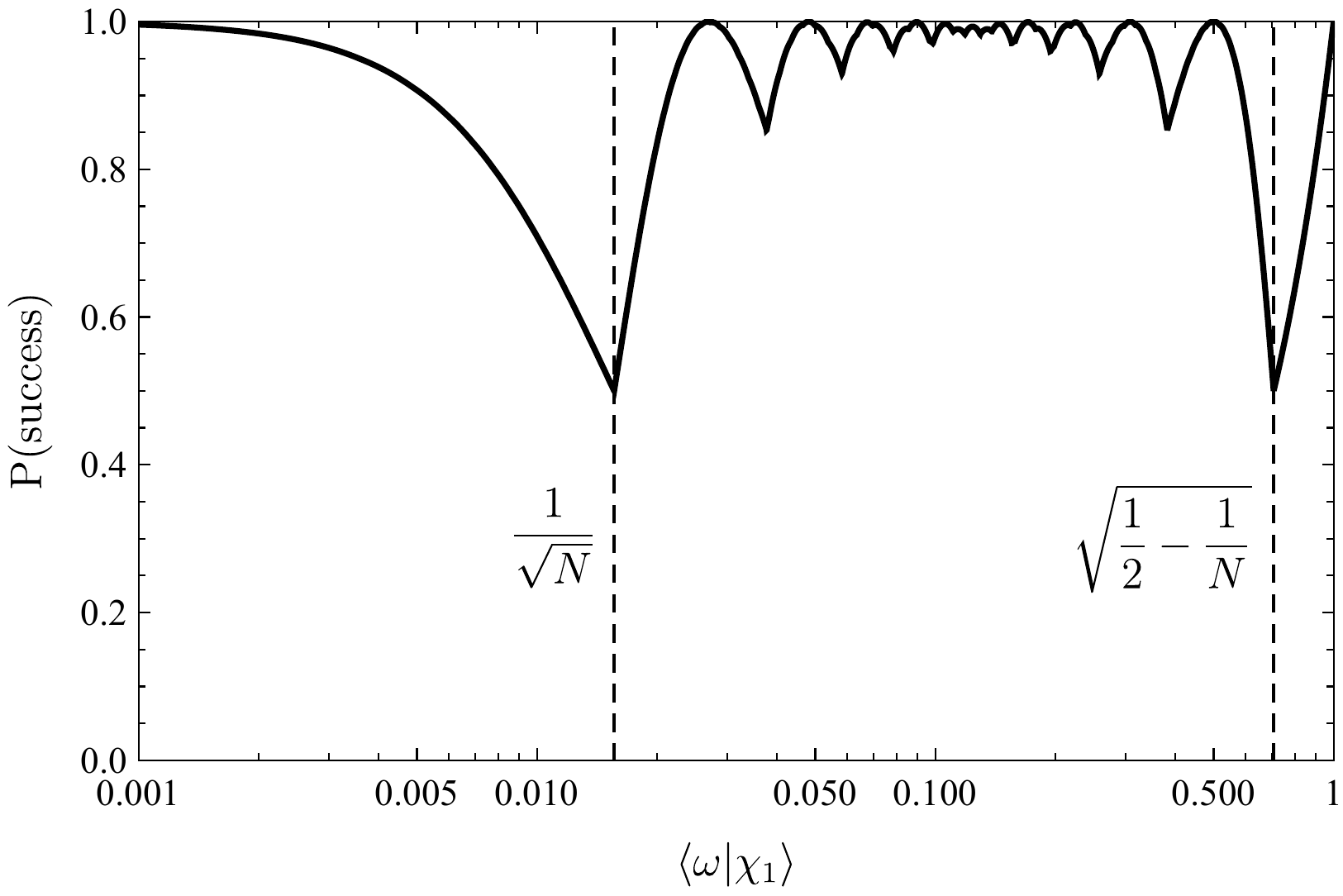}
    \caption{The success probability for the $d=2$ case where $N=4096$, depending on the graph-dependent parameter $\braket{\omega}{\chi_1}$.}
    \label{fig:success}
\end{figure}

We plot the success probability against $\braket{\omega}{\chi_1}$ in \cref{fig:success}, for $N=4096$ vertices. Here all iteration counts are rounded to the nearest integer, and the exact approach to $U_1^{p_1}$ is used. The minimum success probability is approximately 50\%, with two situations producing worst-case error. The first is when $\braket{\omega}{\chi_1} = N^{-1/2}$, resulting in $p_2 = 2 \implies (p_2 - 1)/2 = 1/2$. Clearly this produces 50\% error for implementation of $U_2^{(p_2-1)/2}$, which is the maximum possible. Rounding down to zero iterations gives, calculating directly using \cref{eq:bigmatrix},  
\begin{equation}
    \abs{\bra{\omega} U_1^{\nint{(p_1-1)/2}} \ket{s}}^2 = \frac{1}{2} - \mathcal{O}(N^{-1}) \, .
\end{equation}
The second case is when $\braket{\omega}{\chi_1} = \sqrt{1/2 - N^{-1}}$, resulting in $p_1 = 2 \implies (p_1 - 1)/2 = 1/2$. Again rounding down to zero, we have
\begin{equation}
    \abs{\bra{\omega} U_2^{\nint{(p_2-1)/2}} \ket{s}}^2 = \frac{1}{2} - \mathcal{O}(N^{-1}) \, .
\end{equation}
Thus, we find $\mathcal{O}(1)$ success probability for any value of $\braket{\omega}{\chi_1}$ and consequently asymptotically optimal spatial search for any $d=2$ graph.

\section{Applications}
\label{sec:phasewalkapplications}

\subsection{Johnson graphs}

 The vertices of a Johnson graph $J(n, k)$ correspond to the $k$-element subsets of an $n$-element set, with edges connecting sets that have intersections of size $k-1$. All Johnson graphs are vertex-transitive. When $k=1$, $J(n, 1)$ is exactly the complete graph $\mathbb{K}_n$. In this section we explore the performance of the alternating phase-walk algorithm on $J(n, 2)$ graphs, which will demonstrate both $d=1$ and $d=2$ cases depending on the value of $n$. The $J(n, 2)$ graphs have a total $N = \binom{n}{2}$ vertices. An example of a $k=2$ Johnson graph with 10 vertices is given in \cref{fig:johnson4-2}.

\begin{figure}[b]
    \centering
    \includegraphics[width=0.8\linewidth]{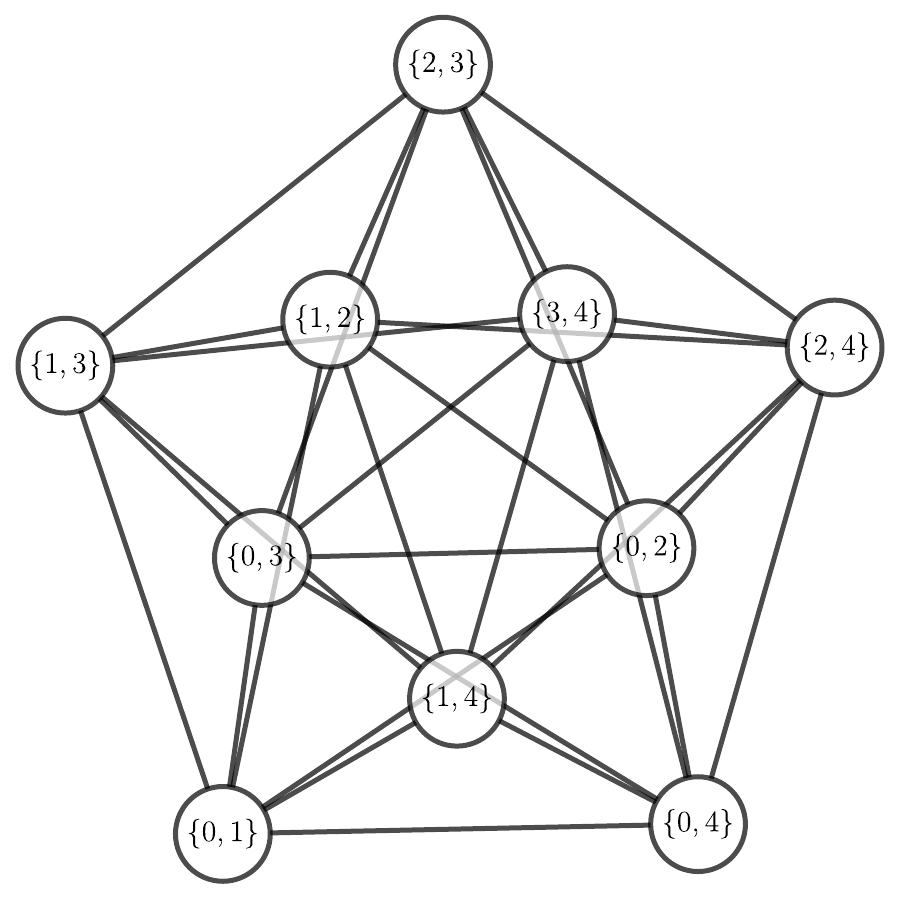}
    \caption{Illustration of a $J(5, 2)$ graph, where edges connect the 2-subsets of $\{ 0, 1, 2, 3, 4 \}$ that share exactly one element.}
    \label{fig:johnson4-2}
\end{figure}

In order to characterise the Laplacian $\mathcal{L}$ for arbitrary $n$ we perform systematic dimensionality reduction using the Lanczos algorithm to the `search subspace' \cite{Novo2015}, which is the Krylov subspace $\mathcal{K}(\mathcal{L}, \ket{\omega})$.  Although this process could be skipped for vertex-transitive graphs by simply directly computing the eigenvalues along with the multiplicities as per \cref{sec:spatialsearchgeneral}, we choose to perform a more detailed analysis in the reduced subspace, which has the additional consequence of eliminating all eigenvalue degeneracies. The dimensionality reduction process is identical to that used for the \cg algorithm \cite{Novo2015}. By vertex-transitivity, for the purposes of analysis the marked vertex can be chosen as any 2-element subset of $[n] = \{0, 1, \ldots (n-1)\}$. Thus, let $\omega = \{ 0, 1 \}$.

The basis generated by the Lanczos algorithm consists of the equal superpositions over subsets that have $(2-j)$ elements in common with the marked element for $j=0, 1, 2$. That is, we have
\begin{align}
    \ket{c_0} &= \ket{\{0, 1\}} \, , \\
    \ket{c_1} &= \frac{1}{\sqrt{2(n-2)}}\sum\limits_{2 \leq x < n} \left(\ket{\{ 0, x \}} + \ket{\{ x, 1 \}}\right) \, , \\
    \ket{c_2} &= \sqrt{\frac{2}{(n-2)(n-3)}} \sum\limits_{\substack{2 \leq x, y < n\\ x\ne y}} \ket{\{x, y\}} \, .
\end{align}
In this basis, the marked vertex is clearly expressed as
\begin{equation}
    \ket{\omega} = (1, 0, 0) \, .
\end{equation}
The equal superposition is given by
\begin{align}
    \ket{s} = \frac{1}{\sqrt{\binom{n}{2}}}\left(1, \sqrt{2(n-2)}, \sqrt{(n-2)(n-3)/2}\right) \, ,
\end{align}
and the Laplacian takes tridiagonal form
\begin{equation}
    \mathcal{L} = \begin{pmatrix}
 2(n-2) & -\sqrt{2(n-2)} & 0 \\
  -\sqrt{2(n-2)} & n-2 & -2 \sqrt{n-3} \\
 0 & -2 \sqrt{n-3} & 4 \\
    \end{pmatrix} \, .
\end{equation}
The non-zero eigenvalues of $\mathcal{L}$ are $\Lambda_0 = \{ n, 2(n-1) \}$, and the eigenstates are
\begin{align}
    \ket{b_0} &= \ket{s} \, ,\\
    \ket{b_1} &= \left(\sqrt{\frac{2}{n}},\frac{n-4}{\sqrt{n(n-2)}},-2\sqrt{\frac{n-3}{n(n-2)}}\right) \, ,\\
    \ket{b_2} &= \left( \sqrt{\frac{n-3}{n-1}},-\sqrt{\frac{2(n-3)}{(n-1) (n-2)}},\sqrt{\frac{2}{(n-1) (n-2)}} \right) \, .
\end{align}
Having obtained the Laplacian eigenvalues and the values of $\braket{\omega}{b_i}$, we now investigate the performance of the algorithm for different $n$. The first walk time $t_1$ is chosen as 
\begin{equation}
    t_1 = \frac{\pi}{\gcd \Lambda_0} = \begin{cases}
        \frac{\pi}{2} & n \equiv 0 \mod 2 \, ,\\
        \pi & n \equiv 1 \mod 2 \, .
    \end{cases}
\end{equation}
It is immediately apparent that the algorithm exhibits different behaviour for even and odd $n$. In fact, there are three different cases, since
\begin{align}
    \uw{\frac{\pi}{2}} =& -(\mathbb{I} - 2\ket{s} \bra{s} - (1+e^{i \frac{n \pi}{2}}) \ket{b_1} \bra{b_1}
    \\&\qquad- (1+e^{i \frac{2(n-1) \pi}{2}}) \ket{b_2} \bra{b_2}) \\
            =& \begin{cases}
                -(\mathbb{I} - 2\ket{s} \bra{s}) & n \equiv 2 \mod 4 \, , \\
                -(\mathbb{I} - 2\ket{s} \bra{s} - 2 \ket{b_1} \bra{b_1}) & n \equiv 0 \mod 4 \, , \\
                -(\mathbb{I} - 2\ket{s} \bra{s} - 2 \ket{b_2} \bra{b_2}) & n \equiv 1 \mod 2 \, .
            \end{cases}
            \label{eq:johnsont1}
\end{align}

\subsubsection{Case 1: $n \equiv 2 \mod 4$}

\begin{figure}
    \centering
    \includegraphics[width=\linewidth]{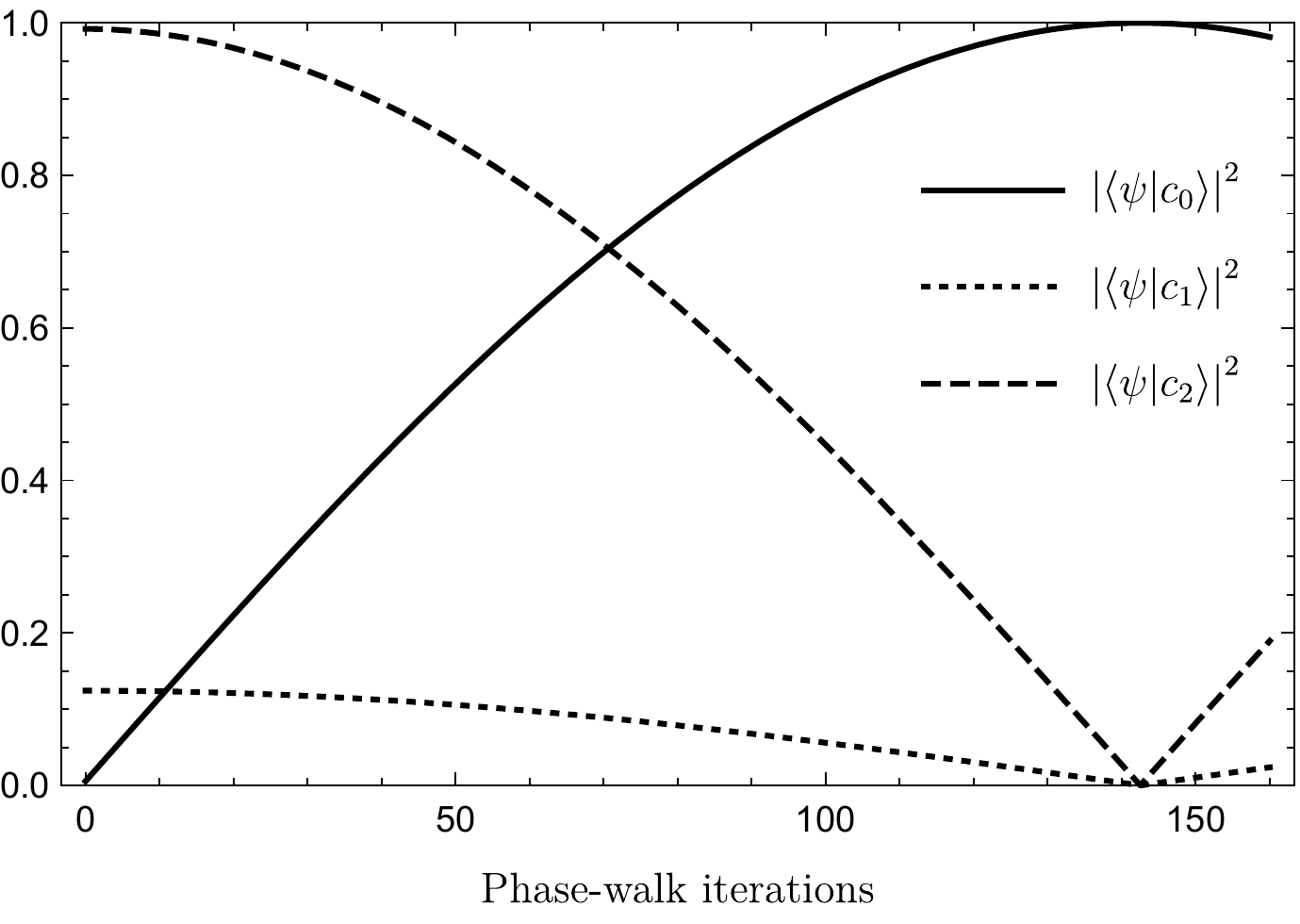}
    \caption{Numerical simulation of the search algorithm for a $J(258, 2)$ graph.}
    \label{fig:johnson2d-258}
\end{figure}


We first discuss the case where $n \equiv 2 \mod 4$, i.e. the class of graphs $J(4m+2,2)$ for non-negative integers $m$. This is an example of the $d=1$ `Grover case', where the evolution is equivalent to that of the complete graph. That is, with a single walk time $t_1 = \frac{\pi}{2}$, the iterate becomes
\begin{equation}
    U_1 = U_w(\frac{\pi}{2}) \uf{\pi} = -(\mathbb{I} - 2 \ket{s} \bra{s})(\mathbb{I} - 2\ket{\omega} \bra{\omega}) \, .
\end{equation}
After approximately $\frac{\pi}{4}\sqrt{N}$ iterations of $U_w(\frac{\pi}{2}) U_f(\pi)$, the probability of measuring the marked vertex is $1 - \mathcal{O}(\frac{1}{N})$. We plot a numerical simulation of these search dynamics in \cref{fig:johnson2d-258} in the reduced search basis $\{ \ket{c_0}, \ket{c_1}, \ket{c_2}\}$.

\subsubsection{Case 2: $n \equiv 0 \mod 4$}

\begin{figure}
    \centering
    \includegraphics[width=\linewidth]{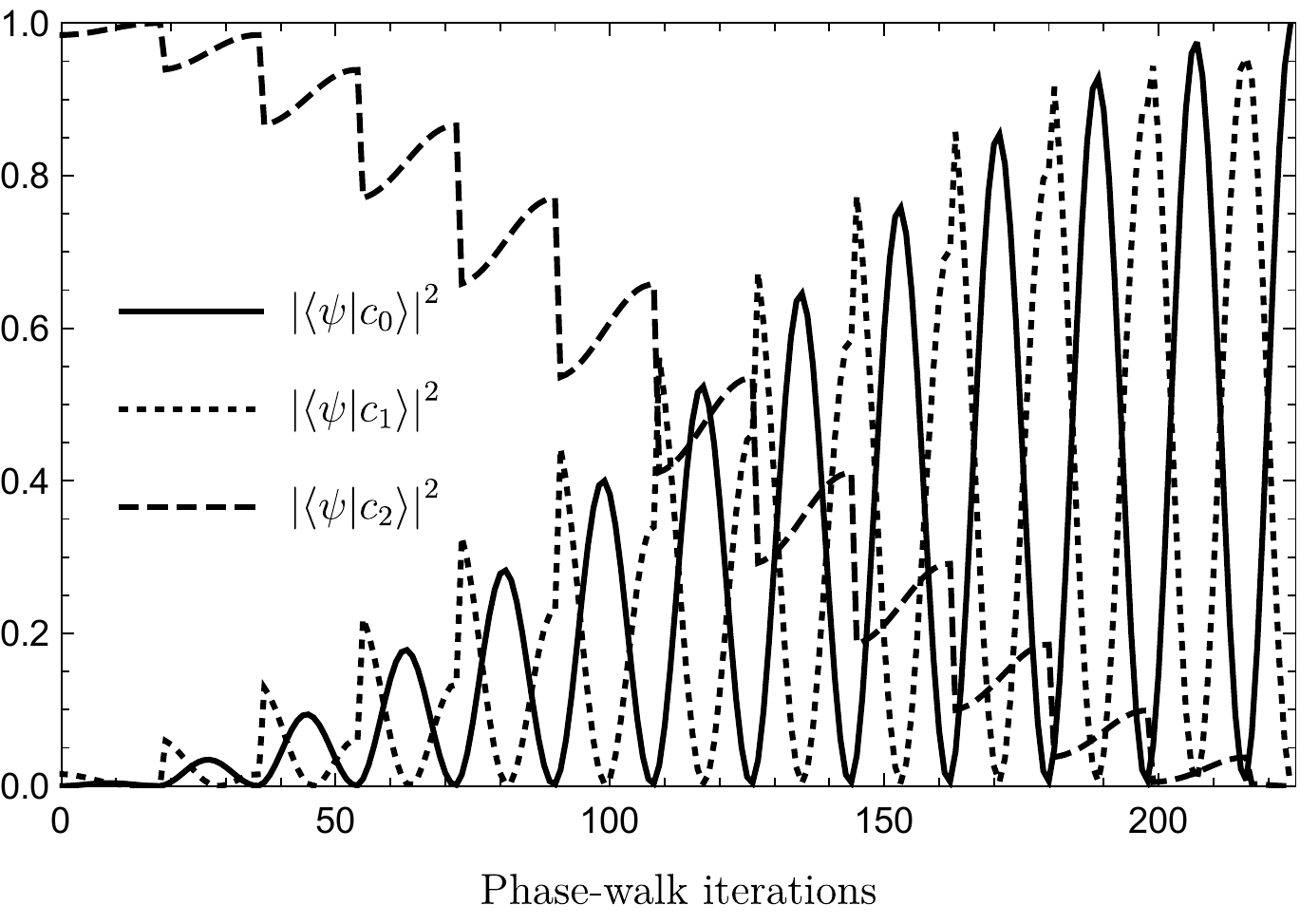}
    \caption{Numerical simulation of the search algorithm for a $J(256, 2)$ graph.}
    \label{fig:approximate-0mod4}
\end{figure}


When $n=4m$ for any integer $m$, the evolution does not reduce to Grover's search and we obtain a $d=2$ case of the framework. By consideration of \cref{eq:johnsont1}, we have $\chi_1 = \{\ket{b_1}\}$ and $\bar{\chi}_1 = \{\ket{b_2}\}$. Hence,
\begin{align}
    p_1 &= \frac{\pi}{2 \arccos\braket{\omega}{\bar{\chi}_1}} = \frac{\pi}{2 \arccos\sqrt{\frac{n-3}{n-1}}} \\
    &= \frac{\pi}{2 \sqrt{2}} \sqrt{n} - \mathcal{O}(n^{-1/2})\, .
\end{align}
We have $\Lambda_1 = \{ n \}$, and it follows that $t_2 = \frac{\pi}{\gcd \Lambda_1} = \frac{\pi}{n}$. Given the second walk time, the next (and final) unitary is
\begin{equation}
    U_2 = \uw{\frac{\pi}{n}} (U_1)^{p_1} \, ,
\end{equation}
producing $\chi_2 = \emptyset$ and $\bar{\chi}_2 = \{ \ket{b_1}\}$. Consequently, 
\begin{align}
    p_2 &= \frac{\pi}{2\arccos\frac{\braket{\omega}{\bar{\chi}_2}}{\sqrt{ \braket{\omega}{s}^2 + \braket{\omega}{\bar{\chi}_2}^2}}} = \frac{\pi}{2 \arccos\sqrt{1 - \frac{1}{n}}} \\
    &= \frac{\pi}{2}\sqrt{n} - \mathcal{O}(n^{-1/2}) \, .
\end{align}
The overall state evolution for the $n=4m$ case uses a total of
\begin{equation}
    N_\text{iter} = \frac{1}{2}(p_1 p_2 - 1) = \frac{\pi ^2}{8 \sqrt{2}} n - \frac{1}{2} - \frac{\pi ^2}{8 \sqrt{2}} - \mathcal{O}(n^{-1})
\end{equation}
phase-walk iterations.
Comparing this to the complete graph, which for large $n$ uses $\frac{\pi}{4}\sqrt{\binom{n}{2}} = \frac{\pi}{4 \sqrt{2}} n$ iterations, we see that the alternating phase-walk framework requires approximately $\frac{\pi}{2}$ times more queries than the lower `Grover bound' on $J(4m,2)$ graphs. In fact, this is the worst-case $d=2$ scaling as per \cref{sec:efficiency}. We plot a numerical simulation of the search dynamics in \cref{fig:approximate-0mod4}. The oscillatory behaviour is indicative of the iteration of $U_1$ enveloped in the iteration of $U_2$.

\subsubsection{Case 3: $n \equiv 1 \mod 2$}

\begin{figure}
    \centering
    \includegraphics[width=\linewidth]{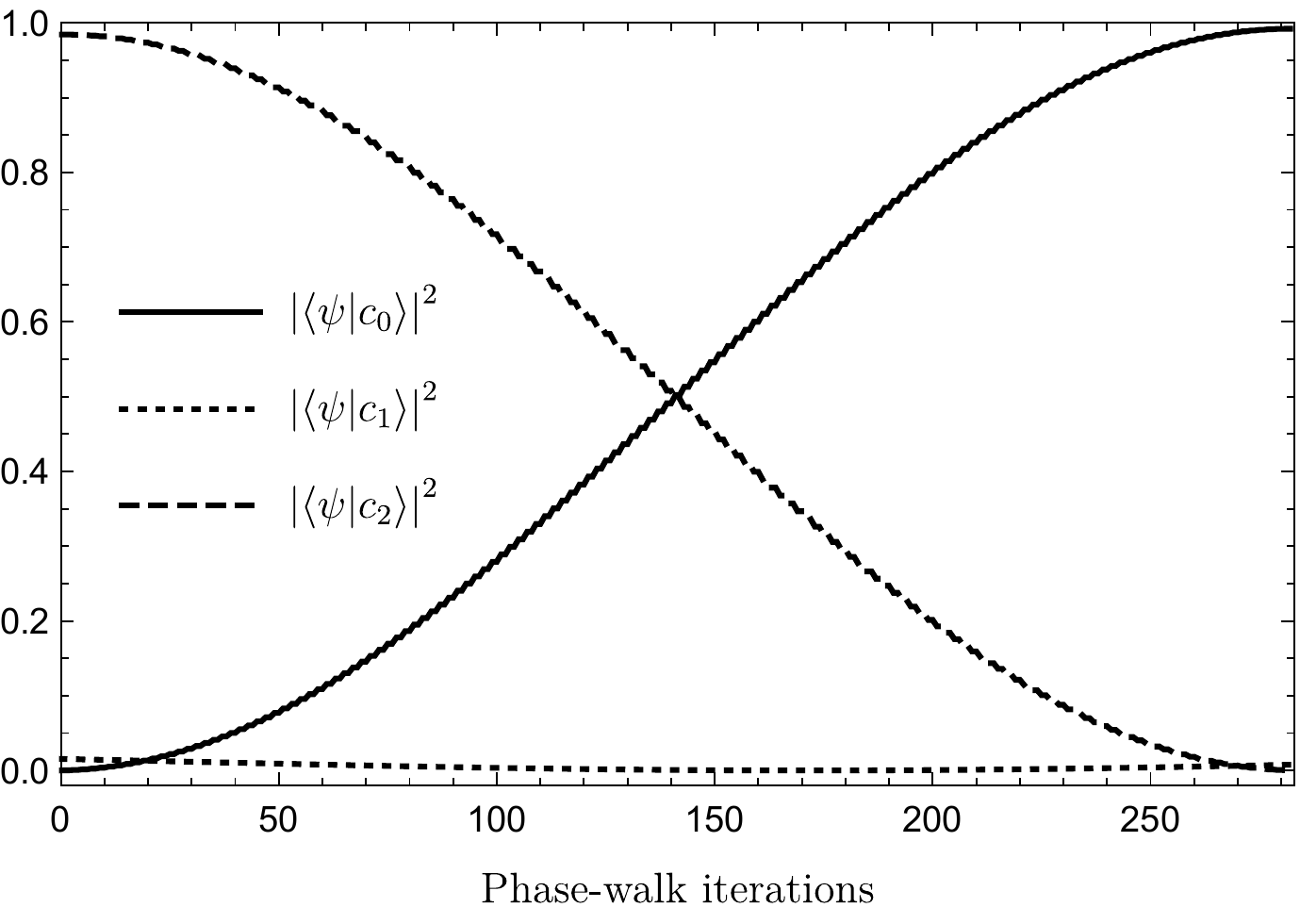}
    \caption{Numerical simulation of the search algorithm for a $J(257, 2)$ graph.}
    \label{fig:johnson-odd}
\end{figure}


The final case is when $n$ is odd, resulting in another $d=2$ case. For the $J(2m+1, 2)$ class of Johnson graphs, we have $\chi_1 = \{ \ket{b_2} \}$, $\bar{\chi}_1 = \{ \ket{b_1} \}$, $\chi_2 = \emptyset$, $\bar{\chi}_2 = \{ \ket{b_2} \}$. Here, $t_1 = \pi$ and $t_2 = \frac{\pi}{2(n-1)}$. Consequently, 
\begin{align}
    p_1 &=  \frac{\pi}{2 \arccos\braket{\omega}{\bar{\chi}_1}} = \frac{\pi}{2\arccos{\sqrt{\frac{2}{n}}}} = 1 + \mathcal{O}(n^{-1/2}) \, , \\
    p_2 &=\frac{\pi}{2\arccos\frac{\braket{\omega}{\bar{\chi}_2}}{\sqrt{ \braket{\omega}{s}^2 + \braket{\omega}{\bar{\chi}_2}^2}}} = \frac{\pi}{2\arccos{\sqrt{\frac{n(n-3)}{(n-1) (n-2)}}}} \\
    &= \frac{\pi}{2 \sqrt{2}}n -\frac{3 \pi }{4 \sqrt{2}} - \mathcal{O}(n^{-1}) \, .
\end{align}
The total iteration count using \cref{eq:niter} is therefore approximately
\begin{equation}
    N_\text{iter} = \frac{\pi}{4 \sqrt{2}}n+\frac{1}{2}\sqrt{n} - \mathcal{O}(1) \, .
\end{equation}
However, since $p_1 < 2$, we use the second approach to exact implementation of $U_1^{p_1}$ described in \cref{sec:exactu1}, making the total number of phase-walk iterations approximately three times this value.

As with the previous cases, we plot a numerical simulation of the search algorithm in \cref{fig:johnson-odd}. To make the plot more readable, only every second phase-walk iteration is shown. In reality, we have rapid oscillation every iteration due to $U_1^{p_1}$.

Thus, we see that $J(n,2)$ graphs can have either $d=1$ or $d=2$, depending on the value of $n \bmod 4$. In all cases, asymptotically optimal quantum search is achieved. This matches the behaviour of the \cg algorithm, which also has asymptotically optimal scaling on $J(n, 2)$ graphs \cite{Janmark2014,Wong2016}.


\subsection{Rook graphs}

\begin{figure*}
    \centering
    \begin{subfigure}[t]{0.32\textwidth}
        \centering
        \includegraphics[width=\textwidth]{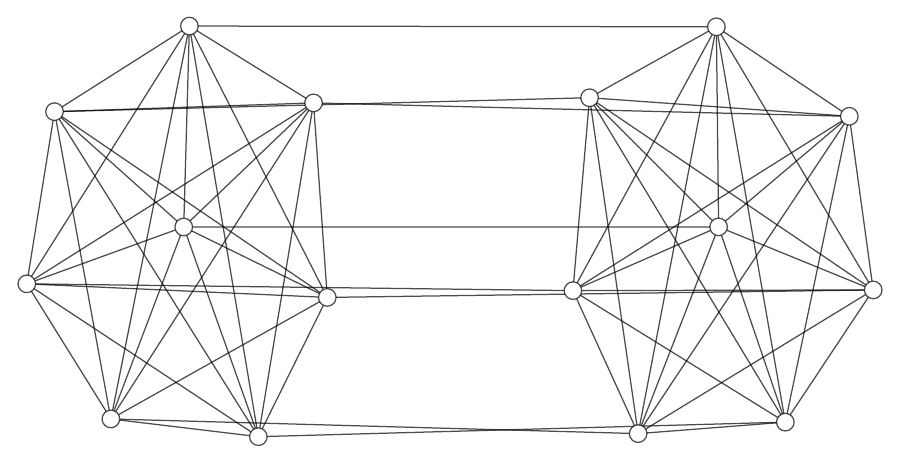}
        \caption{}
    \end{subfigure}
    \begin{subfigure}[t]{0.32\textwidth}
        \centering
        \includegraphics[width=\textwidth]{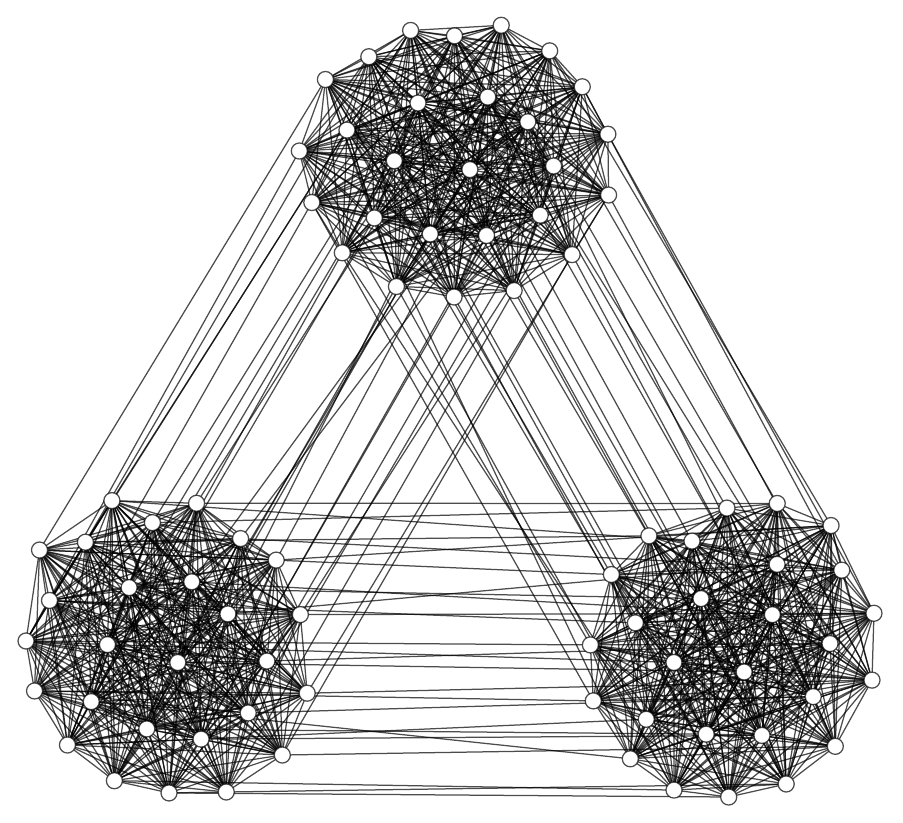}
        \caption{}
    \end{subfigure}
    \begin{subfigure}[t]{0.32\textwidth}
        \centering
        \includegraphics[width=\textwidth]{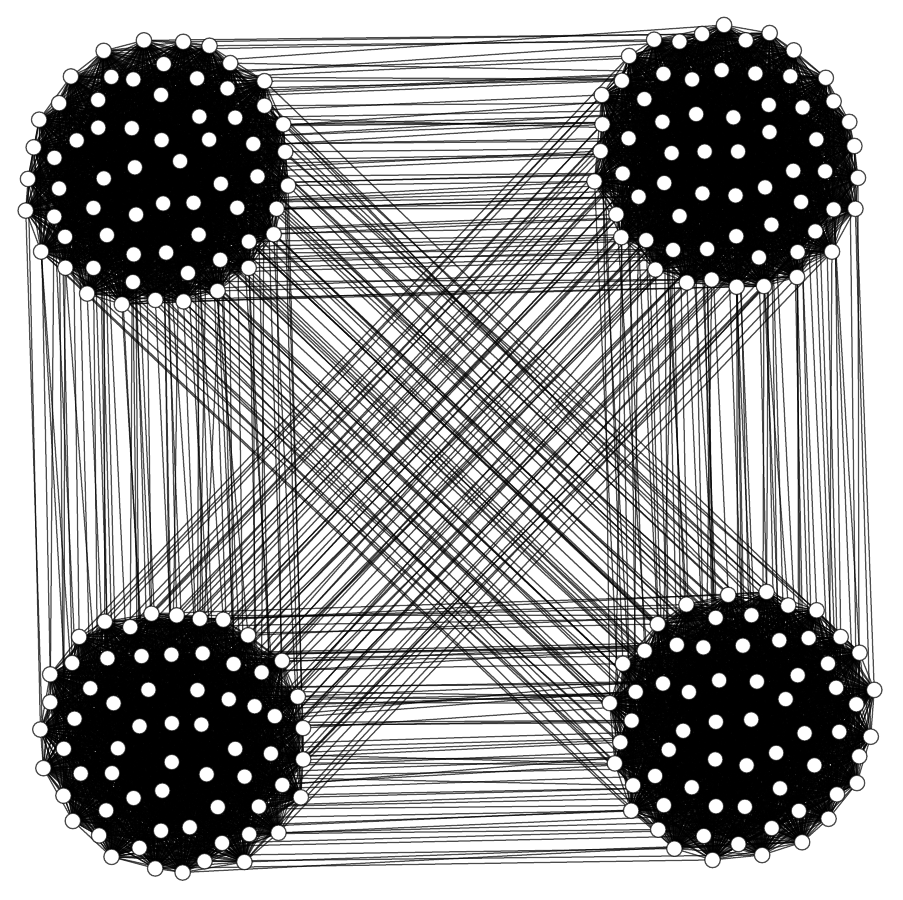}
        \caption{}
    \end{subfigure}
    \caption{The $n\times n^3$ rook graphs, which have a total of $N=n^4$ vertices, for (a) $n=2$ (b) $n=3$ (c) $n=4$. The $\mathcal{CG}$ algorithm scales suboptimally on this class of graphs, whilst the alternating phase-walk algorithm achieves $\mathcal{O}(1)$ marked-vertex overlap after approximately $\frac{\pi^2}{8}\sqrt{N}$ iterations.}
    \label{fig:rookgraph}
\end{figure*}

The $n_1 \times n_2$ rook graph is a particularly interesting case to study, due to the suboptimality of the $\mathcal{CG}$ algorithm for some ratios of $n_1$ to $n_2$~\cite{Chakraborty2020}. A $n_1 \times n_2$ rook graph is defined as the Cartesian product of two complete graphs $\mathbb{K}_{n_1} \cart \mathbb{K}_{n_2}$, having a total of $N=n_1 n_2$ vertices. The edges can be interpreted as the valid moves of a rook piece on a rectangular chessboard. The worst case for the \cg algorithm is when $n_1=n$ and $n_2 = n^3$, with this class of graphs shown in \cref{fig:rookgraph} for $n=2, 3, 4$. On these rectangular chessboards, the $\mathcal{CG}$ algorithm takes time $T=\mathcal{O}(N^{5/8})$ to produce a state having $\mathcal{O}(N^{-1/8})$ overlap with the marked vertex \cite{Chakraborty2020}. That is, the time taken is suboptimal, and the final state has diminishing overlap with the marked vertex. Here we show that by using the alternating phase-walk formulation, one can achieve $\mathcal{O}(1)$ overlap with the marked element after approximately $\frac{\pi ^2}{8} \sqrt{N}$ phase-walk iterations.

The Cartesian product of two vertex-transitive graphs is itself vertex-transitive, so we can analyse the search algorithm assuming $\ket{\omega} = \ket{0}$ without loss of generality. A simple reduced basis in which to study rook graphs is simply the tensor product of the two underlying reduced complete graph bases $\{ \ket{0}, \frac{1}{\sqrt{n_1 - 1}}\sum\limits_{j=1}^{n_1 - 1} \ket{j} \} \otimes \{ \ket{0}, \frac{1}{\sqrt{n_2 - 1}} \sum\limits_{j=1}^{n_2 - 1} \ket{j} \}$, producing
\begin{align}
    \ket{c_0} &= \ket{0} \otimes \ket{0} = \ket{\omega} \, ,\\
    \ket{c_1} &= \frac{1}{\sqrt{n_1 - 1}} \sum\limits_{j=1}^{n_1 - 1} \ket{0} \otimes \ket{j} \, ,\\
    \ket{c_2} &= \frac{1}{\sqrt{n_2 - 1}} \sum\limits_{j=1}^{n_2 - 1} \ket{j} \otimes \ket{0} \, , \\
    \ket{c_3} &= \frac{1}{\sqrt{(n_1-1)(n_2-1)}} \sum\limits_{j=1}^{n_1 - 1}\sum\limits_{k=1}^{n_2 - 1} \ket{j} \otimes \ket{k} \, .
\end{align}
Thus, the Laplacian can be expressed in this basis as
\begin{align}
    \mathcal{L} &= \begin{pmatrix}
    n_1 - 1 & -\sqrt{n_1 - 1} \\
    -\sqrt{n_1 - 1} & 1
    \end{pmatrix} \oplus \begin{pmatrix}
    n_2 - 1 & -\sqrt{n_2 - 1} \\
    -\sqrt{n_2 - 1} & 1
    \end{pmatrix}
\end{align}
where $A \oplus B = A \otimes \mathbb{I} + \mathbb{I} \otimes B$. The non-zero eigenvalues are $\Lambda_0 = \{ n_1, n_2, n_1 + n_2 \}$. Recall that for integral graphs, $d$ is the number of unique exponents of 2 in the prime factorisations of the non-zero eigenvalues. We can write $\Lambda_0 = \{2^{a} m_1, 2^b m_2 , 2^a ( m_1 + 2^{b-a} m_2) \}$ for odd $m_1$ and $m_2$, assuming without loss of generality that $b \geq a$. Thus $d=2$ for any pair of values $(n_1, n_2)$ and consequently the alternating phase-walk algorithm is asymptotically optimal on any rook graph based on the results from \cref{sec:success}, achieving $\mathcal{O}(1)$ overlap with the marked element after $\mathcal{O}(\sqrt{N})$ iterations.

For completeness, we continue with the analysis to explore the dynamics of the search algorithm on rook graphs. The overlaps of the marked vertex with the eigenstates of $\mathcal{L}$ are
\begin{align}
    \braket{\omega}{b_0} &= \braket{\omega}{s} = \frac{1}{\sqrt{n_1 n_2}} \, , \\
    \braket{\omega}{b_1} &= \sqrt{\frac{n_1 - 1}{n_1 n_2}} \, ,\\
    \braket{\omega}{b_2} &= \sqrt{\frac{n_2 - 1}{n_1 n_2}} \, ,\\
    \braket{\omega}{b_3} &= \sqrt{\frac{(n_1-1)(n_2-1)}{n_1 n_2}} \, .
\end{align}
When $n_1 = n$ and $n_2 = n^3$, $\gcd \Lambda = n$ so the first walk time is $t_1 = \frac{\pi}{n}$. In the eigenstate basis,
\begin{align}
    \uw{\frac{\pi}{n}} &= \exp(-i \pi \diag\left(0, 1, n^2, 1 + n^2\right)) \\
    &= \begin{cases}
        \diag\left( 1, -1, 1, -1\right) & n \equiv 0 \mod 2 \, , \\
        \diag\left( 1, -1, -1, 1\right) & n \equiv 1 \mod 2 \, .
    \end{cases}
\end{align}
We consider the even $n$ case, where $\chi_1 = \{\ket{b_2}\}$ and $\bar{\chi}_1 = \{ \ket{b_1}, \ket{b_3} \}$. We have 
\begin{align}
    \braket{\omega}{\chi_1} &= \braket{\omega}{b_2} = \sqrt{\frac{n^3 - 1}{n^4}} \, , \\
    \braket{\omega}{\bar{\chi}_1} &= \sqrt{\braket{\omega}{b_1}^2 + \braket{\omega}{b_3}^2} = \sqrt{1 - \frac{1}{n}} \, .
\end{align}
It also follows that $t_2 = \frac{\pi}{n^3}$, with $\chi_2 = \emptyset$ and $\bar{\chi}_2 = \{ \ket{b_2 }\}$. Hence,
\begin{align}
    p_1 &= \frac{\pi}{2\arccos{\sqrt{1 - \frac{1}{n}}}} = \frac{\pi}{2}\sqrt{n} - \mathcal{O}(n^{-1/2}) \, , \\
    p_2 &= \frac{\pi}{2\arccos{\frac{\braket{\omega}{\bar{\chi_2}}}{\sqrt{\braket{\omega}{s}^2 + \braket{\omega}{\chi_1}^2}}}} = \frac{\pi}{2\arccos{\sqrt{1 - \frac{1}{n^3}}}} \\
    &= \frac{\pi}{2}n^{3/2} - \mathcal{O}(n^{-3/2}) \, ,
\end{align}
and the total number of iterations is approximately
\begin{equation}
    N_\text{iter} = \frac{1}{2}(p_1 p_2 - 1) = \frac{\pi^2}{8}n^2 - \mathcal{O}(n) = \frac{\pi^2}{8}\sqrt{N} - \mathcal{O}(N^{1/4}) \, .
\end{equation}
We plot the measurement probabilities against the number of iterations for an $8 \times 8^3$ rook graph in \cref{fig:rook-plot}, where again the oscillatory pattern is indicative of the iteration of $U_1$ enveloped in the iteration of $U_2$.



\begin{figure}
    \centering
    \includegraphics[width=\linewidth]{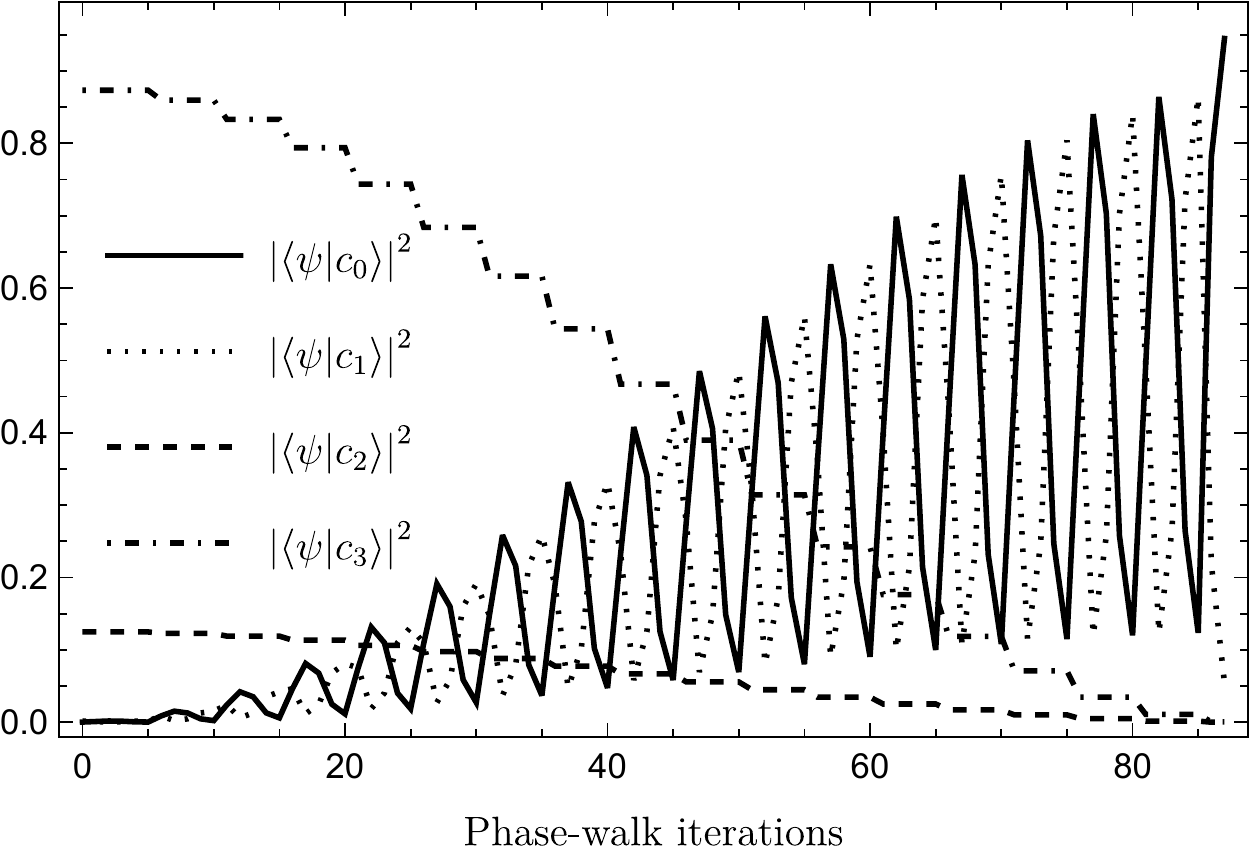}
    \caption{The state evolution of the alternating phase-walk spatial search algorithm on an $8 \times 8^3$ rook graph.}
    \label{fig:rook-plot}
\end{figure}

\subsection{Complete-square graph}

\begin{figure*}
    \centering
    \begin{subfigure}[t]{0.35\textwidth}
        \centering
        \includegraphics[width=\linewidth]{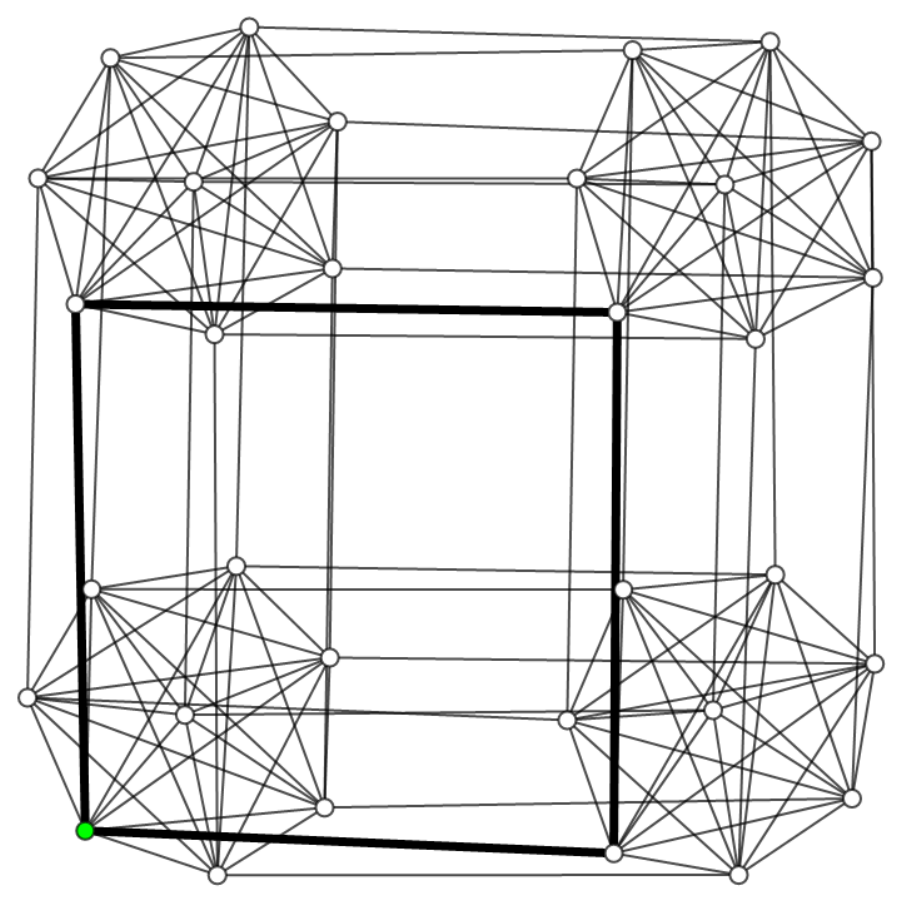}
        \caption{}
    \end{subfigure}
    \qquad
    \begin{subfigure}[t]{0.35\textwidth}
        \centering
        \includegraphics[width=\linewidth]{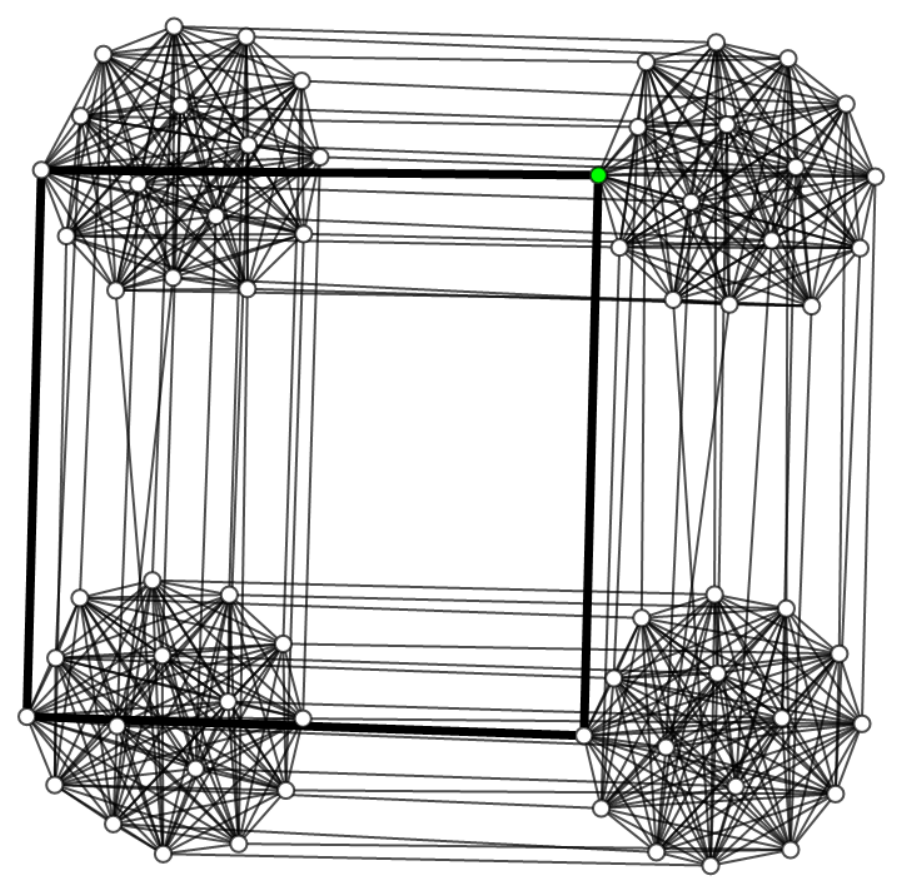}
        \caption{}
    \end{subfigure}
    \caption{Two examples of a  $\mathbb{K}_n \cart \mathbb{Q}_2$ graph, for (a) $n=8$ and (b) $n=16$. The bolded edges indicate the connectivity of a marked vertex (green) to its associated square subgraph.}
    \label{fig:completecube}
\end{figure*}

As a final example, we consider a $d=3$ case that illustrates the intriguing relationship of the alternating phase-walk framework to perfect state transfer, which has a strong connection to periodicity \cite{Christandl2005PerfectTO,Godsil2012StateTO,Kay2010}. Consider the Cartesian product of an $n$-vertex complete graph $K_n$ with a square graph $\mathbb{Q}_2$, denoted $\mathbb{K}_n \cart \mathbb{Q}_2$. An illustration of the $\mathbb{K}_n \cart \mathbb{Q}_2$ graphs is shown in \cref{fig:completecube}. As another vertex-transitive graph, we can take $\ket{0}$ to be the marked element without loss of generality. Performing standard dimensionality reduction, the resultant search basis is simply the tensor product of the two underlying reduced bases
\begin{equation}
    \{ \ket{0}, \frac{1}{\sqrt{n-1}} \sum\limits_{x=1}^{n-1} \ket{x} \} \otimes \{ \ket{0}, \frac{1}{\sqrt{2}} \left( \ket{1} + \ket{2} \right), \ket{3} \} \, .
\end{equation}
In this basis, the reduced Laplacian of $\mathbb{K}_n \cart \mathbb{Q}_2$ is the Kronecker sum of the reduced complete graph Laplacian and the reduced square Laplacian,
\begin{equation}
    \mathcal{L} = \begin{pmatrix}
    n-1 & -\sqrt{n-1} \\
    -\sqrt{n-1} & 1
    \end{pmatrix}
    \oplus
    \begin{pmatrix}
    2 & -\sqrt{2} & 0 \\
    -\sqrt{2} & 2 & -\sqrt{2} \\
    0 & -\sqrt{2} & 2
    \end{pmatrix} \, .
\end{equation}
Calculating the non-zero eigenvalues gives $\Lambda_0 = \{2,4,n,n+2,n+4\}$, and the marked vertex Laplacian overlaps are
\begin{align}
    \braket{\omega}{b_0} &= \frac{1}{\sqrt{2}}\braket{\omega}{b_1} = \braket{\omega}{b_2} = \frac{1}{\sqrt{N}} = \frac{1}{2\sqrt{n}} \, ,\\
    \braket{\omega}{b_3} &= \frac{1}{\sqrt{2}} \braket{\omega}{b_4}= \braket{\omega}{b_5} = \frac{1}{2} \sqrt{1 - \frac{1}{n}} \, .
\end{align}
For brevity the definitions of the Laplacian eigenstates $\ket{b_k}$ are omitted, since they are easily computed from $\mathcal{L}$.
We restrict to the case where $n \equiv 0 \mod 8$. Here $\gcd \Lambda_0 = 2$, so $t_1 = \frac{\pi}{2}$. Thus, in the reduced Laplacian eigenstate basis,
\begin{align}
    U_w(\frac{\pi}{2}) = \diag\left(1,-1,1,1,-1,1\right) \, .
\end{align}
It immediately follows that $\chi_1  = \{ \ket{b_2}, \ket{b_3}, \ket{b_5}\}$ and $\bar{\chi}_1 = \{ \ket{b_1}, \ket{b_4}\}$ with
\begin{align}
    \braket{\omega}{\chi_1} &= \sqrt{\braket{\omega}{b_2}^2 + \braket{\omega}{b_3}^2 + \braket{\omega}{b_5}^2} = \frac{1}{2} \sqrt{2-\frac{1}{n}} \, , \\
    \braket{\omega}{\bar{\chi}_1} &=  \sqrt{\braket{\omega}{b_1}^2 + \braket{\omega}{b_4}^2} = \frac{1}{\sqrt{2}} \, .
\end{align}
Thus
\begin{align}
    p_1 &= \frac{\pi}{2\arccos{\braket{\omega}{\bar{\chi}_1}}}
    = 2 \, ,
\end{align}
and we find the first iteration count is always $2$ independent of the graph size.
Now with $\Lambda_1 = \{4,n,n+4\}$, $\gcd \Lambda_1 = 4$ and so $t_2 = \frac{\pi}{4}$. For the $n \equiv 0 \mod 8$ case, we have $\chi_2 = \{ \ket{b_3}\}$ and $\bar{\chi}_2 = \{ \ket{b_2}, \ket{b_5}\}$. Hence,
\begin{align}
    \braket{\omega}{\chi_2} &= \braket{\omega}{b_3} = \frac{1}{2} \sqrt{1 - \frac{1}{n}} \,, \\
    \braket{\omega}{\bar{\chi}_2} &= \sqrt{\braket{\omega}{b_2}^2 + \braket{\omega}{b_5}^2} = \frac{1}{2} \, .
\end{align}
Then
\begin{equation}
    p_2 = \frac{\pi}{2\arccos{\frac{\braket{\omega}{\bar{\chi}_2}}{\sqrt{\braket{\omega}{s}^2 + \braket{\omega}{\chi_1}^2}}}} = 2 \, ,
\end{equation}
and again we find that the second iteration count is always $2$, independent of the number of vertices.

Finally, $\lambda_2 = \{ n \} \implies t_3 = \frac{\pi}{n}$ and $\chi_3 = \emptyset$, $\bar{\chi}_3 = \{ \ket{b_3}\}$. Hence,
\begin{align}
    p_3 &= \frac{\pi}{2\arccos{\frac{\braket{\omega}{\bar{\chi}_3}}{\sqrt{\braket{\omega}{s}^2 + \braket{\omega}{\chi_2}^2}}}} = \frac{\pi }{2 \arccos\sqrt{1 - \frac{1}{n}}} \\
    &= \frac{\pi}{2}\sqrt{n} - \mathcal{O}(n^{-1/2}) \, .
\end{align}
Using \cref{thm:genproof}, we have the state evolution $U_1^{\nint{(p_1-1)/2}}U_2^{\nint{(p_2-1)/2}}U_3^{\nint{(p_3-1)/2}}\ket{s}$, but $\frac{1}{2}(p_1 - 1) = \frac{1}{2}(p_2 - 1) = \frac{1}{2}$, so we round down to 0 iterations of $U_1$ and $U_2$. This draws parallels to the $d=2$ worst-case error scenario described in \cref{sec:efficiency}. A closer look at the resulting evolution gives
\begin{align}
    U_3^{(p_3-1)/2}\ket{s} &= \ket{\omega_2} = \frac{1}{\sqrt{n}}\ket{s} + \sqrt{1 - \frac{1}{n}} \ket{b_3} \\
    &= \ket{0} \otimes \frac{1}{2} \left(\ket{0} + \ket{1} + \ket{2} + \ket{3}\right)\,
\end{align}
which is precisely the square subgraph containing the marked vertex. Thus, measuring this state has 25\% probability of producing the marked vertex, and more generally gives a strong classical hint as to the location of the marked vertex -- it must be one of the 4 vertices on the associated square. Alternatively, we can utilise the $U_f(\theta)$ parametrisation technique from \cref{sec:exactu1} to find that
\begin{equation}
    \ket{\omega} = U_w(t_2) U_w(t_1) U_f(\frac{\pi}{2}) U_w(t_1) U_f(-\frac{\pi}{2}) \ket{\omega_2} \, .
\end{equation}
Thus, with two additional controlled phase rotations we establish an $1 - \mathcal{O}(\frac{1}{N})$ overall success probability, requiring
\begin{equation}
    N_\text{iter} = \frac{1}{2} p_1 p_2 (p_3 - 1) + 2 = 2 p_3  = \frac{\pi}{2}\sqrt{N} - \mathcal{O}(N^{-1/2})
\end{equation}
phase-walk iterations.
As with the previous examples, the state after each iteration of $U_3$ is plotted in \cref{fig:evolvecompletesquare} using the reduced basis. Here the amplification of the `marked square' is clearly visible.

Moreover, the search procedure has an interesting interpretation in terms of perfect state transfer. Observe that $\uw{\frac{\pi}{2}}\ket{0} = \ket{3}$, i.e. there is perfect state transfer between a vertex and the `opposite' vertex on its square subgraph. Perfect state transfer on $\mathbb{K}_n \cart \mathbb{Q}_2$ is inherited from the same property on the $m$-dimensional hypercube $\mathbb{Q}_m$. With this in mind, the first three steps in the iterate $U_1^{p_1}=\uw{\frac{\pi}{2}}\uf{\pi} \uw{\frac{\pi}{2}}\uf{\pi}$ can be seen to (1) flip the phase of the marked vertex, (2) transfer this phase to the opposite vertex on the square, and (3) re-introduce a phase flip to the marked vertex. Thus, we essentially have a unitary that marks two vertices on the graph using perfect state transfer. Similarly, $U_2$ propagates the phase flip to the other two corners of the square. Hence, each iteration of $U_3$ amplifies not only the marked vertex, but the associated square subgraph. Thus we see that this interesting property can naturally emerge from the alternating phase-walk framework, where perfect state transfer is utilised to amplify a subset of vertices that includes the marked element. 

\begin{figure}
    \centering
    \includegraphics[width=\linewidth]{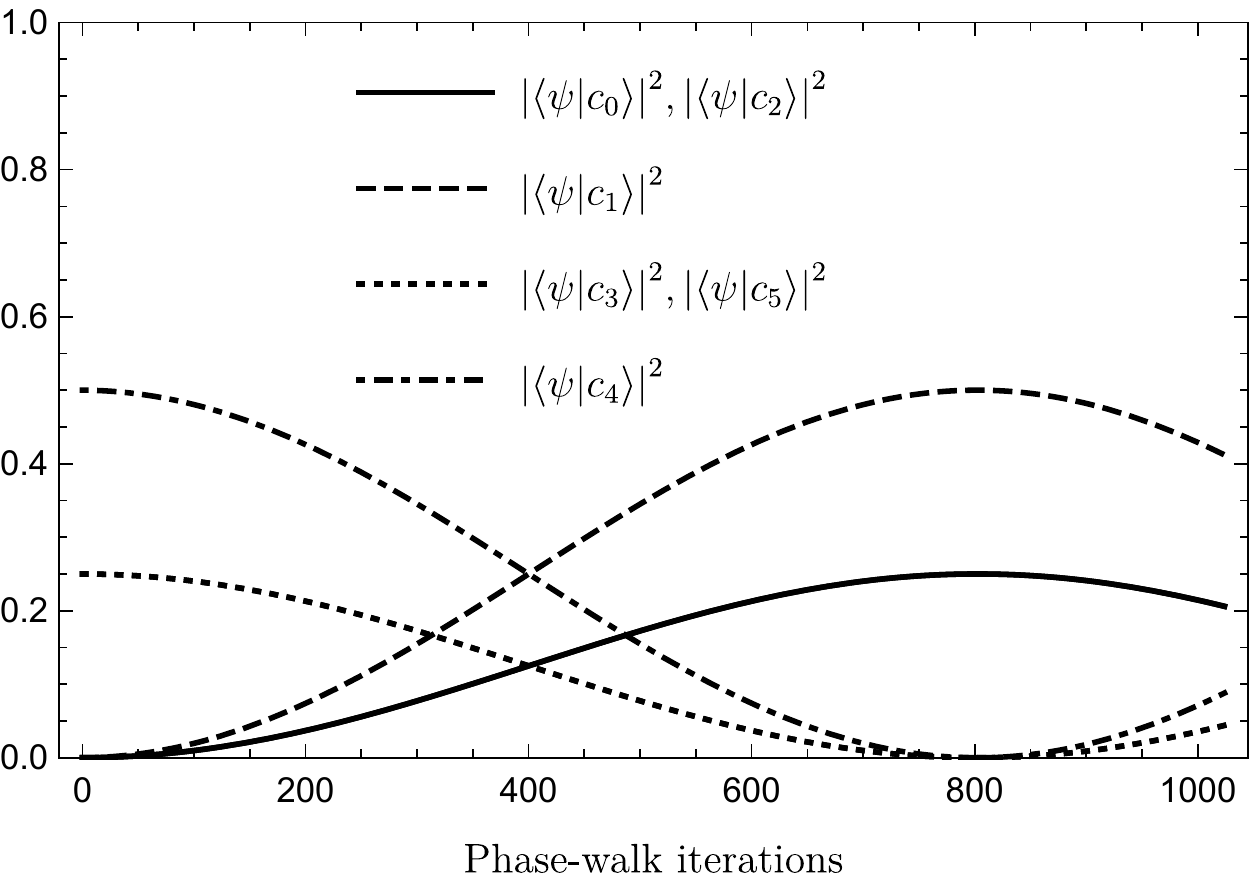}
    \caption{The state evolution on a $\mathbb{K}_n \cart \mathbb{Q}_2$ graph with 262,144 vertices, where $\ket{c_0}$ is the marked vertex, $\ket{c_2}$ is its `opposite' vertex, and $\ket{c_2}$ is the equal superposition of the two other vertices on the associated square subgraph.}
    \label{fig:evolvecompletesquare}
\end{figure}

\section{Discussion and Conclusion}

In this paper, we have proposed a new methodological framework for quantum spatial search on periodic graphs, where quantum walks and marked vertex phase shifts are interleaved in an alternating fashion. We prove a unitary evolution that maps the initial superposition to the marked vertex, where the necessary walk times can be calculated in closed form. We show that the number of iterations is $\mathcal{O}(\sqrt{N})$ if a graph parameter $d$, which is bounded above by the number of unique eigenvalues, is independent of $N$. For the $d=1$ case, the evolution is equivalent to that of Grover's algorithm. We show that the evolution can be made partially error-free without sacrificing efficiency, hinting at the possibility of a fully deterministic and asymptotically optimal spatial search algorithm for any $d$. Utilising this modification, we show the success probability in the $d=2$ case is always $\mathcal{O}(1)$. Finally, we demonstrate the algorithm's performance on a number of different graph classes. Of particular note is the $n \times n^3$ rectangular rook graph, where it is known that the \cg algorithm is suboptimal. We demonstrate that under the alternating phase-walk formalism, rook graphs meet the $d=2$ criteria and thus a marked vertex is located with $\mathcal{O}(1)$ probability using $\mathcal{O}(\sqrt{N})$ iterations. We also show by example that the algorithm can naturally utilise perfect state transfer, in order to amplify a subgraph containing the marked vertex.

Our work motivates additional study into the alternating phase-walk formalism, in terms of further characterising the set of graphs where efficient spatial search is achievable, and drawing connections to the \cg algorithm. The restriction to periodic graphs is natural, and provokes study into whether periodic graphs are the \textit{only} graphs that admit efficient quantum search using alternating phase-walks.

The alternating phase-walk framework also has the advantage of being amenable to quantum circuit implementation, in addition to `fundamental' quantum spatial search on a structure as per the \cg algorithm. For some graphs, the time-evolution of a quantum walk can be fast-forwarded \cite{Loke2017,Atia2017}, leading to an efficient quantum circuit for database search. An interesting topic for future research is characterising which periodic graphs admit fast-forwarded quantum simulation for the purposes of circuit-based alternating phase-walk search.

Another compelling direction for future research is making the algorithm deterministic. Our previous work and numerical evidence hints that with careful choice of marked vertex phase shifts, it will be possible to achieve deterministic search using the alternating phase-walk approach for any periodic graph. Furthermore, we also point out that a deterministic spatial search algorithm $\mathcal{A}_\omega$ implies a kind of perfect state transfer between two vertices $\ket{\omega_1}$ and $\ket{\omega_2}$ by computing \begin{equation}
    e^{i \phi} \ket{\omega_2} = \mathcal{A}_{\omega_2} \mathcal{A}_{\omega_1}^\dag \ket{\omega_1} \, ,
\end{equation}
making this a promising idea for future study.

Finally, we briefly comment on the implications of this work to quantum combinatorial optimisation and relevant future directions. It is likely that the performance of a quantum mixer on an arbitrary optimisation problem is bounded by its performance on the database search problem. Therefore an appropriate choice of mixer is one which obtains the optimal $\mathcal{O}(\sqrt{N})$ scaling on the search problem, based on the spectral criteria described in this paper. Another significant pathway for future work is to consider the generalised phase unitary $U_f(\theta)=e^{-i \theta \sum_{x=0}^{N-1} f(x) \ket{x} \bra{x}}$ and replace the search objective function $f \colon \{ 0, 1, \ldots N-1 \} \mapsto \{ 0, 1\}$ with a generic objective function $f \colon \{ 0, 1, \ldots N-1 \} \mapsto \{ 0, 1, \ldots M-1\}$, $M > 0$. By performing a similar analysis to that of this paper, there is the potential to determine optimal values for (or relationships between) the evolution parameters $\Vec{t}$ and $\Vec{\theta}$ in this more general case. This would reduce or eliminate the need for classical variational optimisation to determine parameter values, which suffers from the curse of dimensionality as the number of parameters increases \cite{PhysRevA.97.022304}.

\acknowledgements

This research was supported by a Hackett Postgraduate Research Scholarship and an Australian Government Research Training Program Scholarship at The University of Western Australia. We thank Leonardo Novo for valuable insight and suggestions.  We would also like to thank Lyle Noakes, Michael Giudici, Gorden Royle, John Bamberg, Caiheng Li and Cheryl Praeger for mathematical and graph-theoretical discussions.  

\bibliography{refs.bib}

\end{document}